\newcommand{\analysispath}{.}
\newcommand*\linenomathpatch[1]{%
	\expandafter\pretocmd\csname #1\endcsname {\linenomath}{}{}%
	\expandafter\pretocmd\csname #1*\endcsname{\linenomath}{}{}%
	\expandafter\apptocmd\csname end#1\endcsname {\endlinenomath}{}{}%
	\expandafter\apptocmd\csname end#1*\endcsname{\endlinenomath}{}{}%
}
\newcommand*\linenomathpatchAMS[1]{%
	\expandafter\pretocmd\csname #1\endcsname {\linenomathAMS}{}{}%
	\expandafter\pretocmd\csname #1*\endcsname{\linenomathAMS}{}{}%
	\expandafter\apptocmd\csname end#1\endcsname {\endlinenomath}{}{}%
	\expandafter\apptocmd\csname end#1*\endcsname{\endlinenomath}{}{}%
}
\let\linenomathAMS\linenomathWithnumbers
\patchcmd\linenomathAMS{\advance\postdisplaypenalty\linenopenalty}{}{}{}
\let\linenomathAMS\linenomathNonumbers
\newtheorem{theorem}{Theorem}
\newtheorem{proposition}[theorem]{Proposition}
\newtheorem{corollary}[theorem]{Corollary}
\newtheorem{assumption}[theorem]{Assumption}
\newtheorem{definition}[theorem]{Definition}
\newtheorem{example}[theorem]{Example}
\newtheorem{conjecture}[theorem]{Conjecture}
\def\N{\mathcal{N}}
\def\R{\mathbb{R}}
\newcommand{\norm}[1]{\left\|#1\right\|}
\newcommand{\abs}[1]{\left|#1\right|}
\title{Distance preservation in state-space methods for detecting causal interactions in dynamical systems}
\author[1]{Matthew R.\ O'Shaughnessy\footnote{moshaughnessy6@gatech.edu}}
\author[1]{Mark A.\ Davenport}
\author[1]{Christopher J.\ Rozell}
\affil[1]{School of Electrical \& Computer Engineering, Georgia Institute of Technology, Atlanta, GA USA}
\date{\today}
\begin{document}

\maketitle

\begin{abstract}
	We analyze the popular ``state-space'' class of algorithms for detecting casual interaction in coupled dynamical systems. These algorithms are often justified by Takens' embedding theorem, which provides conditions under which relationships involving attractors and their delay embeddings are continuous. In practice, however, state-space methods often do not directly test continuity, but rather the stronger property of how these relationships preserve inter-point distances. This paper theoretically and empirically explores state-space algorithms explicitly from the perspective of distance preservation. We first derive basic theoretical guarantees applicable to simple coupled systems, providing conditions under which the distance preservation of a certain map reveals underlying causal structure. Second, we demonstrate empirically that typical coupled systems do not satisfy distance preservation assumptions. Taken together, our results underline the dependence of state-space algorithms on intrinsic system properties and the relationship between the system and the function used to measure it --- properties that are not directly associated with causal interaction.
\end{abstract}

\section{Introduction}
\label{sec:introduction}

Many topics of scientific inquiry involve the fundamental mathematical problem of using observational time series data to infer the causal structure underlying coupled dynamical systems \cite{runge2018causal,runge2019inferring,cliff2022unifying}. For example, determining regions of the brain in which seizures originate requires inferring causal relationships in neural activity that is frequently modeled by coupled dynamical systems \cite{levanquyen1999nonlinear}. Developing mechanistic understanding of climate systems involves detecting causal interactions in similar coupled dynamical system models \cite{runge2019inferring}. Analogous applications arise in fields from engineering and economics to political science and public policy.

Mathematically, suppose we have two potentially coupled dynamical systems with states $x$ and $y$:
\begin{equation}
	\begin{aligned}
	\dot{x} &= f(x,y) \\
	\dot{y} &= g(x,y).
	\end{aligned}
	\label{eq:forced-system}
\end{equation}
We say that $x$ drives $y$ ($x \to y$) if the differential equation governing $y$ has a nontrivial dependence on $x$, so that a perturbation to $x$ produces a change in the trajectory of $y$ \cite{pearl2009causality,peters2020causal}. Our goal is to determine whether $x \to y$ from the time series observations $\{x_t,y_t\}_{t=1}^T$.

The ``state-space'' category of algorithms infer causal structure in coupled dynamical systems using an idea, inspired by Takens' theorem, that has been termed the ``closeness principle'' \cite{amigo2018detecting}. These procedures aim to determine whether $x \to y$ from the relationship between inter-point distances on subsystems' delay embeddings. (See Sections \ref{sec:background-delay-embeddings-detecting-causal-structure} and \ref{sec:background-causal-inference} for a precise description.) But while these algorithms are typically based on how this relationship between delay embeddings \emph{preserves distances}, Takens' theorem guarantees only the weaker property of its \emph{continuity} (Figure \ref{fig:intro-fig}). There is thus a gap between the distance-preservation-based operation of state-space algorithms and the continuity-based theory they are inspired by, making it difficult for practitioners to know when state-space algorithms might be appropriate for a problem at hand.

\begin{figure}[t]
	\centering
	\includegraphics[width=0.6\textwidth]{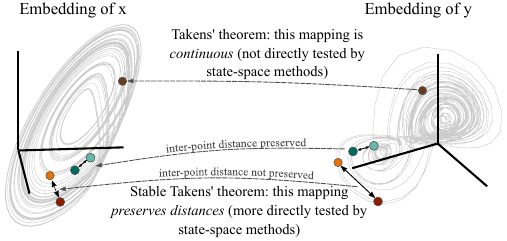}
	\caption{Takens' theorem guarantees the mapping between contemporaneous points on delay embeddings is diffeomorphic (continuous), but state-space methods are based on a stronger test of whether close distances are preserved between embeddings. This paper examines state-space methods through the lens of the stable Takens' theorem \cite{yap2011stable,eftekhari2018stabilizing}, which provides conditions under which distance preservation can be guaranteed.}
	\label{fig:intro-fig}
\end{figure}

In this paper we aim to reduce this gap by studying the closeness principle explicitly through the lens of distance preservation. We do so by making use of the recent \emph{stable} Takens' theorem \cite{yap2011stable,eftekhari2018stabilizing}, which shows that, under stronger conditions on the underlying dynamical systems, a delay embedding preserves not only \emph{topological} structure (which is not directly tested by closeness-principle-based methods), but also \emph{geometric} structure (which is more directly tested by closeness-principle-based methods). Our goal is to use the stable Takens' theorem to gain understanding about the settings under which the closeness principle holds and causal inference techniques based on it may be theoretically justified.

The main contributions of this paper are twofold. We first theoretically connect the causal structure of a system with properties of a key mapping in the restricted case of linear systems and observation functions (Section \ref{sec:theory-linear-phiy}), and provide a theoretically-grounded test for causal interaction based the closeness principle applicable to both linear and nonlinear systems (Section \ref{sec:theory-tests}). Our theoretical results build on the stable Takens' theorem of \cite{yap2011stable,eftekhari2018stabilizing} by applying them to heuristics used by state-space methods, and complement recent theoretical work on these algorithms that is based on continuity rather than distance preservation \cite{cummins2015efficacy}. Second, we provide empirical results that show the effectiveness of many state-space algorithms eludes straightforward distance-preservation-based explanations (Section \ref{sec:experiments}). \emph{The main message of this paper is that, while it is possible to provably identify causal interaction by analyzing how distances are preserved between delay embeddings, such tests are strongly dependent on system- and measurement-dependent properties that can vary significantly between systems and are often impossible to determine in practice.}

\section{Background}
\label{sec:background}

Sections \ref{sec:background-delay-embeddings-detecting-causal-structure} and \ref{sec:background-causal-inference} describe the delay embedding procedure, the closeness principle, and related tests for causal interaction. Sections \ref{sec:background-takens-theorem} and \ref{sec:background-stable-takens} then set the stage for our theoretical results with precise statements of Takens' theorem and the linear stable Takens' theorem.

\subsection{Delay embeddings and detecting causal interaction}
\label{sec:background-delay-embeddings-detecting-causal-structure}

Let the evolution of states $x \in \R^{n_x}$ and $y \in \R^{n_y}$ be described by dynamical systems $\dot{x} = f(x, y)$ and $\dot{y} = g(x, y)$. We say that $x$ drives $y$ ($x \to y$) if $g$ depends nontrivially on $x$ (so that a perturbation to $x$ will affect the trajectory of $y$), and similarly that $y$ drives $x$ ($y \to x$) if $f$ depends nontrivially on $y$ (so that a perturbation to $y$ will affect the trajectory of $x$). We use the notation $x \leftrightarrow y$ to indicate that both $x \to y$ and $y \to x$, and the notation $x \perp y$ to indicate that neither $x \to y$ nor $y \to x$.

In this paper we restrict our attention to systems that are governed by deterministic dynamics and constrained to a low-dimensional attractor after some time $t'$. We denote by $M_x$, $M_y$, and $M_{xy}$ the attractors of $x$, $y$, and $(x,y)$ such that (perhaps after a period of transient behavior) $x(t) \in M_x \subset \R^{n_x}$, $y(t) \in M_y \subset \R^{n_y}$, and $(x,y)(t) \in M_{xy} \subset \R^{n_x+n_y}$.\footnote{With slight abuse of notation, we refer alternately to an attractor itself and the set of points collected from it. For instance, we use $M_x$ to refer to both the attractor that $x(t)$ occupies and the set of observations $\{x_t\}$.} Throughout the paper we assume that data is collected once the system $(x,y)$ has reached its attractor --- that is, that the observed time series do not contain transients.

Using a scalar observation function $\varphi \colon M \to \R^m$, points on an attractor $M$ can be mapped to points on a \emph{delay embedding} in $N \subset \R^m$ as \cite{packard1980geometry}
\begin{equation}
	\Phi_{\varphi}(x_t,y_t) = \left[ \varphi(x_t,y_t),~ \varphi_x(x_{t-\tau},y_{t-\tau}),~ \dots~ \varphi(x_{t-(m-1)\tau},y_{t-(m-1)\tau}) \right]^T,
	\label{eq:delay-embedding}
\end{equation}
where the observation time lag $\tau$ and embedding dimension $m$ are parameters typically chosen to maximize some heuristic notion of information preservation \cite{fraser1986independent,casdagli1991state}.\footnote{Note that we have suppressed the dependence of $\Phi_{\varphi}$ on past time lags of $x_t$ and $y_t$.} We will be interested in observation functions with either $x$, $y$, or $(x,y)$ as their domains; for example,
\begin{equation*}
	\Phi_{\gamma_x}(x_t) = \left[ \gamma_x(x_t),~ \gamma_x(x_{t-\tau}),~ \dots~ \gamma_x(x_{t-(m-1)\tau}) \right]^T
\end{equation*}
uses the scalar observation function $\gamma_x \colon M_x \to \R$ to map $M_x$ to the delay embedding $N_x \subset \R^m$.

Takens' Theorem shows that, under conditions described in Section \ref{sec:background-takens-theorem}, there is a diffeomorphic relationship between an attractor $M$ and its delay embedding $N = \Phi_{\varphi}(M)$ \cite{packard1980geometry,takens1981detecting}. This implies that the mapping $\Phi_{\varphi} \colon M \to N$ is diffeomorphic, and thus continuous with continuous inverse.

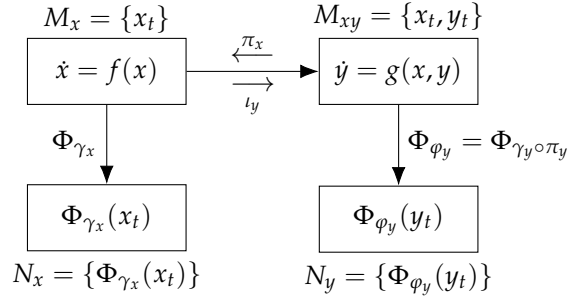
\begin{figure}[t]
	\centering
	\begin{tikzpicture}
		\matrix (m) [matrix of math nodes, row sep=3em, column sep=5em, minimum width=6em, minimum height=3em, nodes={draw=black, minimum height=2.5em}]
		{
			\dot{x} = f(x) & \dot{y} = g(x,y) \\
			\Phi_{\gamma_x}(x_t) & \Phi_{\varphi_y}(y_t) \\
		};
		\path[-{Latex[length=0.7em]}]
		(m-1-1) edge node [left] {$\Phi_{\gamma_x}$} (m-2-1)
		(m-1-2) edge node [right] {$\Phi_{\varphi_y}=\Phi_{\gamma_y\circ\pi_y}$} (m-2-2)
		(m-1-1) edge node [above] {$\overset{\pi_x}{\longleftarrow}$} node [below] {$\underset{\iota_y}{\longrightarrow}$} (m-1-2);
		\node[above=of m-1-1, xshift=0em, yshift=-3em] {$M_x = \{ x_t \}$};
		\node[above=of m-1-2, xshift=0em, yshift=-3em] {$M_{xy} = \{ x_t, y_t \}$};
		\node[below=of m-2-1, xshift=0em, yshift=2.8em] {$N_x = \{ \Phi_{\gamma_x}(x_t) \}$};
		\node[below=of m-2-2, xshift=0em, yshift=2.8em] {$N_y = \{ \Phi_{\varphi_y}(y_t) \}$};
	\end{tikzpicture}
	\caption{Attractors of dynamical systems and their delay-coordinate embeddings when $x \rightarrow y$. A critical distinction is the part of system $(x,y)$ that each observation function collects. Observation function $\gamma_x \colon M_x \to \R$ has only the $x$ subsystem as its domain. Observation function $\varphi_y \colon M_{xy} \to \R$, meanwhile, formally has the complete system $(x,y)$ as its domain, but depends nontrivially only on the $y$ subsystem. (That is, $\varphi_y$ can be written as the composition of a function $\gamma_y \colon M_y \to \R$ with the projection $\pi_y \colon (x,y) \mapsto y$.) We will be concerned with the conditions under which $\Phi_{\varphi_y}$ reveals information about the complete system $(x,y)$ despite collecting only information about $y$.}
	\label{fig:mappings}
\end{figure}

These relationships between attractors and their embeddings expose a property that state-space methods leverage to detect causal interactions \cite{cenys1991estimation,rulkov1995generalized,sugihara2012detecting}. Suppose that $x \to y$, so that the dynamical systems governing $x$ and $y$ can be written as $\dot{x} = f(x)$ and $\dot{y} = g(x,y)$ and the attractors $M_x = \{ x_t \}$ and $M_{xy} = \{ (x_t,y_t) \}$ are well-defined. With a fortuitous choice of observation functions, Takens' theorem then guarantees (see Section \ref{sec:background-takens-theorem}) that the delay mappings $\Phi_{\gamma_x} \colon M_x \to N_x$ and $\Phi_{\varphi_y} \colon M_{xy} \to N_y$ are bicontinous. Because the projection $\pi_x \colon M_{xy} \to M_x$ defined as $(x_t,y_t) \mapsto x_t$ is also continuous, there exists a continuous mapping from $N_y$ to $N_x$
\begin{equation*}
	N_y \overset{\Phi_{\varphi_y}^{-1}~}{\longrightarrow} M_{xy} \overset{\pi_x~}{\longrightarrow} M_x \overset{\Phi_{\gamma_x}~}{\longrightarrow} N_x
\end{equation*}
defined as $\Psi_{y \to x} = \Phi_{\gamma_x} \circ \pi_x \circ \Phi_{\varphi_y}^{-1}$ (Figure \ref{fig:mappings}) \cite{cummins2015efficacy,amigo2018detecting}. The converse, however, does not necessarily hold when $x \to y$: the corresponding mapping from $N_x$ to $N_y$, $\Psi_{x \to y} = \Phi_{\varphi_y} \circ \iota_y \circ \Phi_{\gamma_x}^{-1}$, involves the inclusion map $\iota_y = x_t \mapsto (x_t,y_t)$. Because $\iota_y$ is not (in general\footnote{Note, however, that when coupling from $x$ to $y$ becomes strong enough, the system can enter ``generalized synchrony'' \cite{rulkov1995generalized} in the sense that there exists a function $H$ such that $y(t) = H(x(t))$. In this case $y$ can also be predicted from $x$ and it becomes impossible to distinguish between the cases $x \to y$ and $y \to x$.}) continuous, when $x \to y$ the map $\Psi_{y \to x} \colon N_y \to N_x$ is continuous while $\Psi_{x \to y} \colon N_x \to N_y$ is not.

The fact that $\Psi_{y \to x}$ is continuous when $x \to y$, but (generally) not when $y \to x$ suggests a method for inferring causal structure from time series observations of $(x,y)$. However, it is difficult to determine the existence or nonexistence of this continuous mapping from data. In practice, heuristics based on \emph{relative distances} on $N_x$ and $N_y$ are commonly used \cite{rulkov1995generalized,quianquiroga2002performance,andrzejak2003bivariate,chicharro2009reliable,sugihara2012detecting,ma2015detecting,amigo2018detecting,krakovska2018comparison}. We state the general principle used by these methods as a conjecture:
\begin{conjecture}[``Closeness principle'']
	When $x \to y$ in the underlying system, pairs of points that are close to each other on $N_y$ will correspond to pairs of points that are also close on $N_x$. In notation, when $x \to y$ the map $\Psi_{y \to x} \colon N_y \to N_x$ preserves close distances.
	\label{conj:closeness-principle}
\end{conjecture}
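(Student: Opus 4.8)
The statement is phrased as a conjecture because it need not hold without substantial extra hypotheses — as the rest of the paper goes on to show — so what I would actually attempt is a precise, hypothesis-laden refinement of it. First I would fix the meaning of ``preserves close distances'' as a global one-sided Lipschitz estimate
\begin{equation*}
	\norm{\Psi_{y\to x}(u) - \Psi_{y\to x}(v)} \le L\,\norm{u-v}, \qquad u,v \in N_y,
\end{equation*}
with $L$ moderate: points that are close on $N_y$ are mapped to points that are close on $N_x$, with no claim about the converse (which is false in general, since the projection $\pi_x$ deletes the $y$-coordinates and can only contract distances).

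The plan is to obtain $L$ by chaining Lipschitz bounds along the factorization $\Psi_{y\to x} = \Phi_{\gamma_x}\circ\pi_x\circ\Phi_{\varphi_y}^{-1}$. Step one: apply the linear stable Takens' theorem (Section \ref{sec:background-stable-takens}) to the pair $(M_{xy},\varphi_y)$ to conclude that $\Phi_{\varphi_y}$ is a bi-Lipschitz embedding, $\alpha\norm{p-q}\le\norm{\Phi_{\varphi_y}(p)-\Phi_{\varphi_y}(q)}\le\beta\norm{p-q}$ for $p,q\in M_{xy}$; the lower constant $\alpha>0$ makes $\Phi_{\varphi_y}^{-1}\colon N_y\to M_{xy}$ well-defined and $(1/\alpha)$-Lipschitz. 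Step two: $\pi_x\colon(x,y)\mapsto x$ deletes coordinates, hence is $1$-Lipschitz (and $\pi_x(M_{xy})=M_x$ since the $x$-subsystem is autonomous when $x\to y$). Step three: apply the stable Takens' theorem again, now to $(M_x,\gamma_x)$, to get an upper isometry constant making $\Phi_{\gamma_x}$ $\beta'$-Lipschitz. Composition then gives $L=\beta'/\alpha$, which is exactly the claimed estimate. Since $\Psi_{x\to y}$ instead factors through the inclusion $\iota_y$, which is not even continuous, the argument has no counterpart there — the asymmetry state-space methods exploit.

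The main obstacle is the lower constant $\alpha$ in Step one: this is the genuinely \emph{stable} content of the theorem, and it is purchased only under quantitative conditions tying the effective dimension/conditioning of $M_{xy}$, the bandwidth of $\varphi_y$ relative to the lag $\tau$, and the number of delays $m$ together. If $\alpha$ is not bounded away from zero — which can happen for perfectly ordinary coupled systems — then $\Phi_{\varphi_y}^{-1}$ need not be Lipschitz, the chain breaks, and the conjecture fails; this is precisely the continuity-versus-distance-preservation gap the paper probes, and it is where I would develop the linear analysis (Section \ref{sec:theory-linear-phiy}). A secondary obstacle is the passage to genuinely nonlinear $f,g$ and observation functions: the cleanest route is to localize to pairs $u,v\in N_y$ within a radius $\delta$ small relative to the reach of the attractors, linearize, and invoke a local (Whitney/Nash-type) stable-embedding estimate, at the cost of $L$ and the admissible $\delta$ depending on curvature — which is the form the test in Section \ref{sec:theory-tests} would take, with every constant flagged as system- and measurement-dependent.
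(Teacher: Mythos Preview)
Your proposal is correct and matches the paper's approach essentially line-for-line: the paper does not prove the conjecture as stated but instead establishes the refined, hypothesis-laden version you describe (Proposition~\ref{thm:causality-stable-takens-1}), using exactly your three-step chain $\Psi_{y\to x}=\Phi_{\gamma_x}\circ\pi_x\circ\Phi_{\varphi_y}^{-1}$ with the nonexpansiveness of $\pi_x$ sandwiched between the two stable-embedding bounds, arriving at the upper constant $u_{\gamma_x}/l_{\varphi_y}$ (your $\beta'/\alpha$, squared). Your identification of the lower constant $\alpha=\sqrt{l_{\varphi_y}}$ as the crux --- and its dependence on system/measurement properties rather than causal structure alone --- is precisely the paper's main message; the only minor divergence is that the paper assumes global isometry constants \eqref{eq:stabletakens-Phi-gammax}--\eqref{eq:stabletakens-Phi-phiy} outright rather than passing through the local linearization you sketch for the nonlinear case.
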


\subsection{Closeness-principle-based methods for causal inference}
\label{sec:background-causal-inference}

A growing literature of state-space methods has applied versions of Conjecture \ref{conj:closeness-principle} using distance-based heuristics such as how distances are preserved between pairs of points on $N_y$ and their contemporaneous point pairs on $N_x$ \cite{rulkov1995generalized,arnhold1999robust,quianquiroga2002performance,andrzejak2003bivariate,chicharro2009reliable}, recurrence plots \cite{romano2007estimation,hirata2010identifying}, or more direct comparisons of corresponding distances on $N_y$ and $N_x$ \cite{cenys1991estimation,amigo2018detecting}. Here we outline a few representative approaches that are discussed further in Sections \ref{sec:experiments} and \ref{sec:discussion}.

One family of techniques operationalize a version of the closeness principle that is based on how nearest neighbor relationships are preserved.\footnote{While based on distance preservation, methods based on nearest neighbor preservation may be more robust to system properties such as observation noise. The preservation of nearest neighbors underlies other algorithms commonly applied to delay embeddings such as the false nearest neighbors method for determining embedding dimension \cite{kennel1992determining}} Denote the index of the $j^{\mathrm{th}}$ spatial nearest neighbor of $N_x[i]$ by $n_{ij}^{(x)}$, and the index of the $j^{\mathrm{th}}$ spatial nearest neighbor of $N_y[i]$ by $n_{ij}^{(y)}$.\footnote{In computing these nearest neighbor indices, the $W$ temporal neighbors (the ``Theiler window'') are often excluded.} Define $D_i(X) = \frac{1}{T'-1} \sum_{j=1,j \neq i}^{T'} \norm{N_x[i] - N_x[j]}_2^2$ to be the mean-square distance between points on $N_x$, $D_i^k(X) = \frac{1}{k} \sum_{j=1}^k \norm{N_x[i] - N_x[n_{ij}^{(x)}]}_2^2$ to be the mean-squared distance from a point on $N_x$ to its $k$ nearest neighbors, and $D_i^k(X \mid Y) = \frac{1}{k} \sum_{j=1}^k \norm{N_x[i] - N_x[n_{ij}^{(y)}]}_2^2$ to be the mean-squared distance from a point on $N_x$ to its $k$ ``mutual'' neighbors. Based on prior measures of \cite{arnhold1999robust,quianquiroga2002performance}, Andrzejak et al.\ propose
\begin{equation}
	M(X \mid Y) = \max \left\{ \frac{1}{T'} \sum_{i=1}^{T'} \frac{D_i(X) - D_i^k(X \mid Y)}{D_i(X) - D_i^k(X)},~ 0 \right\}
	\label{eq:metric-andrzejak2003bivariate}
\end{equation}
as a measure of directional influence of $x$ on $y$ \cite{andrzejak2003bivariate}. This metric reflects the intuition that as the coupling from $x$ to $y$ grows stronger, points that are spatially close on $N_y$ are more likely to correspond to points that are spatially close on $N_x$: when $x \perp y$, $D_i^k(X \mid Y) \approx D_i(X)$, but when $x \to y$, nearest neighbor relationships on $N_y$ become informative of nearest neighbor relationships on $N_x$ so that $D_i^k(X \mid Y) \approx D_i^k(X)$.\footnote{Empirical evidence has shown that $M(Y \mid X)$ also increases as the coupling $x \to y$ grows stronger, but not as quickly. The alternative metric $\Delta M = M(X \mid Y) - M(Y \mid X)$ has been suggested to mitigate this influence \cite{romano2007estimation,smirnov2005detection,chicharro2009reliable}. A measure that detects the overall (undirected) coupling strength, $M_s = \frac12 \left( M(X \mid Y) + M(Y \mid X) \right)$, has also be proposed \cite{smirnov2005detection}.} Statistical tests based on surrogate techniques are proposed in \cite{andrzejak2003bivariate}.

Rank-based versions of these measures have been proposed as variants that are more robust to attractor geometry and noise statistics. Define $g_{ij}$ to be the rank of the distance from $N_x[i]$ to $N_x[j]$, and let $G_i^k(X \mid Y) = \frac{1}{k} \sum_{j=1}^k g_{i,n_{ij}^{(y)}}$. Chicharro and Andrzejak \cite{chicharro2009reliable} propose
\begin{equation}
	L(X \mid Y) = \frac{1}{T'} \sum_{i=1}^{T'} \frac{G_i(X) - G_i^k(X \mid Y)}{G_i(X) - G_i^k(X)},
	\label{eq:metric-chicharro2009reliable}
\end{equation}
where $G_i(X) = \frac{T'}{2}$ is the mean rank between points in $N_x$ and $G_i^k(X) = \frac{k+1}{2}$ is the mean rank of a point to one of its $k$ nearest neighbors. A Wilcoxon signed rank test can be performed to evaluate the significance of $\Delta L = L(X \mid Y) - L(Y \mid X) > 0$.

The popular convergent cross-mapping (CCM) algorithm of \cite{sugihara2012detecting} evaluates how well $N_y$ can construct an estimate of $x$ using the mutual-neighbor-based ``simplex projection'' method of \cite{sugihara1990nonlinear}. Here, the estimate
\begin{equation*}
	\widehat{M_x}[i] \mid N_y = \sum_{j=1}^{m+1} w_j M_x[ n_{i,j}^{(y)} ]
\end{equation*}
is computed using the exponential weights
\begin{equation*}
	w_j = \frac{\exp\left(-\norm{N_y[t]-N_y[n_{i,j}^{(y)}]}/\norm{N_y[t]-N_y[n_{i,1}^{(y)}]}\right)}{\sum_{k=1}^{m+1} \exp\left(-\norm{N_y[t]-N_y[n_{i,k}^{(y)}]}/\norm{N_y[t]-N_y[n_{i,1}^{(y)}]}\right)}.
\end{equation*}
The existence of the relationship $x \to y$ is inferred based on whether the estimate $\widehat{M}_x \mid N_y$ increases in fidelity to $M_x$ when the attractor ``fills in'' as more points are used in the reconstruction. Variants of this technique improve inference by adding time delay information \cite{ye2015distinguishing}, using partial correlations to separate direct and indirect influences \cite{leng2020partial}, evaluating map smoothness using neural network accuracy \cite{ma2015detecting}, and compensating for short time series using neural network models of dynamics \cite{debrouwer2021latent}.

Other methods make more direct use of relative distances of point pairs on $N_y$ and their contemporaneous point pairs on $N_x$. In \cite{cenys1991estimation,romano2007estimation}, recurrence patterns --- measures based on the probability that $y$ returns to a neighborhood of $y_t$ when $x$ returns to the neighborhood of $x_t$, and vice-versa --- are used to assess the relative complexity of the dynamics of $x$ and $y$. In \cite{hirata2010identifying,amigo2018detecting}, causal structure is inferred from properties of the joint distribution of distances $d(x(t_1),x(t_2))$ and $d(y(t_1),y(t_2))$.

In perhaps the most rigorous approach, \cite{cummins2015efficacy} shows that causal relationships can be identified based on the existence of continuous noninjective surjective mappings between delay coordinate maps. To operationalize this theory, the authors apply the heuristic of \cite{pecora1995statistics} to detect the existence of continuous mappings between corresponding points on different delay coordinate maps. The approach of \cite{harnack2017topological} also examines the continuity and (non)injectivity of this mapping by quantifying the relative influence of a systems' intrinsic and external dynamics by examining the expansiveness of the mapping $\Psi_{y \to x} \colon N_y \to N_x$.

\subsection{Takens' theorem}
\label{sec:background-takens-theorem}

The guarantees provided by Takens' theorem \cite{takens1981detecting} are a key mathematical idea behind closeness-principle-based algorithms. Let $C^k(M,N)$ be the set of continuous functions mapping $M$ to $N$ with $k$ continuous derivatives, and let $D^k(M,N)$ be the set of diffeomorphisms mapping $M$ to $N$ (i.e., functions mapping $M$ to $N$ for which both the function and its inverse are in $C^k$).

\begin{theorem}[Takens' Theorem \cite{takens1981detecting}]
	Let $M$ be a compact manifold. For pairs $(\phi,\varphi)$ of dynamical system $\phi \in D^2(M,M)$ and measurement function $\varphi \in C^2(M,\R)$, the map $\Phi_{\varphi} \colon M \rightarrow \R^m$ defined in \eqref{eq:delay-embedding}, where $m = 2 \mathrm{dim}(M)+1$, is an embedding for generic $(\phi,\varphi) \subset D^2(M,M) \times C^2(M,\R)$.
	\label{thm:takens}
\end{theorem}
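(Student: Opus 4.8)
The statement is Takens' theorem, so the plan is the classical transversality argument. Since $M$ is compact, $\Phi_\varphi \colon M \to \R^m$ is an embedding precisely when it is an injective immersion; I would therefore show separately that the set of pairs $(\phi,\varphi)$ for which $\Phi_\varphi$ is an immersion, and the set for which it is injective, are each residual in $D^2(M,M)\times C^2(M,\R)$, and then conclude by the Baire category theorem that their intersection --- the set of pairs giving an embedding --- is residual, hence generic. It is convenient to write $\Phi_\varphi(p) = \bigl(\varphi(p),\varphi(\phi^{-1}p),\dots,\varphi(\phi^{-(m-1)}p)\bigr)$ as in \eqref{eq:delay-embedding} (the direction of iteration is immaterial since $\phi$ is a diffeomorphism), to fix a finite atlas with a subordinate partition of unity, and to reduce the global conditions to local ones on chart neighborhoods, where finite-dimensional families of bump-function perturbations of $\varphi$ bring Thom's parametric transversality theorem to bear.

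The first substep is to dispatch short periodic orbits. The coordinates of $\Phi_\varphi$ are not independent functions but the single function $\varphi$ evaluated along the orbit segment $p,\phi^{-1}p,\dots,\phi^{-(m-1)}p$, and a perturbation of $\varphi$ localized near one orbit point controls only the corresponding coordinate --- which works only when those $m$ points are distinct, i.e.\ away from periodic points of period $\le m-1$. So I would first use a Kupka--Smale-type genericity statement to perturb $\phi$ within $D^2(M,M)$ so that, for each $k\le m-1$, the period-$k$ points form a finite set of hyperbolic orbits whose linearizations have distinct eigenvalues; this is an open dense condition on $\phi$ and leaves only finitely many ``bad'' orbits, to be handled by hand.

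For the immersion condition, off the bad orbits the orbit points are distinct, so the localized perturbations of $\varphi$ move the $m$ rows of $D\Phi_\varphi(p)$ independently; applying parametric transversality to the $1$-jet of $\Phi_\varphi$ and noting that the rank-deficient locus of $m\times n$ matrices has codimension $m-n+1 > n = \dim M$ (using $m = 2n+1$), genericity of $\varphi$ forces $D\Phi_\varphi$ to have full rank everywhere off the bad orbits; on each isolated low-period orbit the immersion condition is checked directly from the distinct-eigenvalue normal form via a Vandermonde-type nondegeneracy. For injectivity, consider $F\colon (M\times M)\setminus\Delta \to \R^m$, $F(p,q)=\Phi_\varphi(p)-\Phi_\varphi(q)$; since $\dim\bigl((M\times M)\setminus\Delta\bigr)=2n<m$, transversality of $F$ to $\{0\}$ is the same as $0\notin\mathrm{im}(F)$, and the same localized perturbations (now near the distinct orbit points of both $p$ and $q$) make this hold generically, with pairs on a common bad orbit again treated by hand. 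Points near the diagonal are handled separately: the immersion property, on compact $M$, yields a uniform radius within which $\Phi_\varphi$ is injective (a tubular-neighborhood / inverse-function argument), and this radius is stable under small perturbations, so combining the two regimes gives global injectivity for a residual set of $(\phi,\varphi)$.

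The main obstacle --- and the reason this is not immediate from Whitney's embedding theorem --- is precisely the rigidity of $\Phi_\varphi$: its $m$ coordinates are not free but are $\varphi$ pulled back along iterates of $\phi$, so Thom transversality cannot be applied to $\Phi_\varphi$ directly. The technical heart of the argument is the bookkeeping that turns localized perturbations of the single observation function $\varphi$ into enough independent directions in the relevant jet spaces to run the transversality theorems, together with the separate and more delicate analysis of periodic orbits of period $\le m-1$, where that independence genuinely fails and one must instead exploit the nondegenerate linear structure of the perturbed $\phi$ along those orbits. A secondary nuisance is that $M$ is only assumed to be an abstract compact manifold, so all of this must be carried out in local charts and patched with a partition of unity rather than in an ambient Euclidean space.
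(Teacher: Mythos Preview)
The paper does not prove this statement at all: Takens' theorem is quoted in Section~\ref{sec:background-takens-theorem} as a background result with a citation to \cite{takens1981detecting}, and no proof (or even proof sketch) is given anywhere in the paper. The only results the authors prove themselves are Propositions~\ref{thm:linear-forced-stable-takens}, \ref{thm:causality-stable-takens-1}, and \ref{thm:causality-stable-takens-iff} (and the one-line Corollary~\ref{thm:causality-stable-takens-1-test}), whose proofs appear in the appendices. So there is no ``paper's own proof'' to compare your proposal against.

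That said, your outline is a faithful summary of the classical transversality proof of Takens' theorem, essentially as in Takens' original paper and the later measure-theoretic refinement of Sauer--Yorke--Casdagli. The decomposition into immersion and injectivity, the Kupka--Smale reduction to finitely many hyperbolic low-period orbits with distinct eigenvalues, the use of bump-function perturbations of $\varphi$ localized near distinct orbit points to obtain enough independent directions for parametric transversality, the codimension count $m-n+1 = n+2 > n$ for the rank-deficient stratum, the $2n < m$ count for injectivity off the diagonal, and the separate Vandermonde-type argument on the bad orbits are all the standard ingredients and are correctly identified. The only quibble is terminological: ``generic'' in Takens' statement means open and dense (as the paper itself notes just after the theorem), not merely residual, so after the Baire/transversality argument you should also observe that the embedding condition is open in the $C^1$ topology on compact $M$, which upgrades residual to open and dense. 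This is routine, but worth stating since your sketch concludes only with ``residual.''
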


That $\Phi_{\varphi}(M)$ is an \emph{embedding} of $M$ means that the mapping $\Phi_{\varphi}$ is a homeomorphism: a continuous bijection with continuous inverse. That this holds for \emph{generic} pairs $(\phi,\varphi)$ means that the set of dynamical systems $\phi$ and measurement functions $\varphi$ for which $\Phi_{\varphi}$ produces an embedding of $M$ is open and dense in the $C^1$ topology. This genericity can be roughly thought of as telling us that ``most'' pairs $(\phi,\varphi)$ produce valid embeddings. The work of \cite{sauer1991embedology} reinforces this intuition by proving a version of the theorem that applies to a measure-theoretic \emph{prevalent} set of $(\phi,\varphi)$.

The \emph{forced} Takens' theorem extends this guarantee to the particular type of measurement functions and forced systems that we are often concerned about when detecting unidirectional coupling. Takens' theorem guarantees that a generic choice of $(\phi,\varphi)$ produces an embedding. However, the delay embeddings used by closeness-principle-based methods depend on \emph{fixed} measurement functions that in some cases use only information about either $x$ or $y$. (By using ``only information about $x$,'' we mean that a measurement function $\varphi_x$ can be written as $\varphi_x = \gamma_x \circ \pi_x$ for some $\gamma_x$, where $\pi_x$ is the projection $(x,y)(t) \mapsto x(t)$.) These fixed measurement functions do not necessarily belong to the open and dense set of $(\phi,\varphi)$ that satisfy Takens' theorem. The forced Takens' theorem of \cite{stark1999delay} shows that, under weak additional conditions on $\phi$, $\varphi_y \colon M_{xy} \to \R$ satisfies the conclusions of Takens' theorem when $x \to y$, and similarly that $\varphi_x \colon M_{xy} \to \R$ satisfies the conclusions of Takens' theorem when $y \to x$.

More formally, consider a forced dynamical system $(x,y) \subset M_x \times M_y$ of the form
\begin{align*}
	x_{t+1} &= f(x_t) \\
	y_{t+1} &= g(x_t,y_t)
\end{align*}
so that its underlying causal structure is $x \to y$. Takens' theorem holds for $(\phi,\varphi)$ in open and dense subsets of $D^2(M_x \times M_y, M_x \times M_y) \times C^2(M_x \times M_y, \R)$. However, if a delay embedding is constructed using only observations of $y$ --- as is done by state-space algorithms --- we would like to know when Takens' theorem holds for $(\phi,\varphi)$ in open and dense subsets of $D^2(M_x, M_x) \times D^2(M_x \times M_y, M_y) \times C^2(M_y, \R)$. The set $D^2(M_x, M_x) \times D^2(M_x \times M_y, M_y)$ is not generic in $D^2(M_x \times M_y, M_x \times M_y)$ and the  $C^2(M_y, \R)$ is not generic in $C^2(M_x \times M_y, \R)$, so Takens' theorem does not necessarily apply to this forced system. The \emph{forced} Takens' theorem of \cite{stark1999delay} provides (mild) additional assumptions under which the conclusions of Takens' theorem hold when only values of $y$ are observed:
\begin{theorem}[Forced Takens' Theorem {\cite[Thm.~3.1]{stark1999delay}}]
	Let $M_x$ and $M_{xy}$ be compact manifolds of dimension $n_x$ and $n_x+n_y \geq 1$, respectively. Suppose that the periodic orbits of period $<2d$ of $f \in D^2(M_x)$ are isolated and have distinct eigenvalues, where $d \geq (n_x+n_y) + 1$. Then for $r \geq 1$, there exists an open and dense set of $(f,\varphi) \in D^r(M_x \times M_y, M_x) \times C^r(M_x, \R)$ for which the map $\Phi_{f,g,\varphi}$ is an embedding.
	\label{thm:forced-takens}
\end{theorem}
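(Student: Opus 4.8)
The plan is to follow the transversality-and-genericity template of Takens' original argument and the refinements of \cite{sauer1991embedology}, adapted to the skew-product structure of a forced system as in \cite{stark1999delay}. I would first reduce the claim that $\Phi_{f,g,\varphi}$ is an embedding to two conditions that each hold on an open and dense set of pairs $(f,\varphi)$ --- that $\Phi_{f,g,\varphi}$ is everywhere an immersion, and that it is injective --- and then invoke the standard fact that an injective immersion of a compact manifold is an embedding. Each of the two conditions fails only on a nowhere-dense set, which I would establish by exhibiting the failure locus as the preimage of a submanifold of large codimension under an evaluation map and applying the parametric transversality theorem.

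For injectivity I would work on $M_{xy} \times M_{xy}$ away from the diagonal, writing $F(x,y) = (f(x),g(x,y))$ for the full skew-product map: two distinct points $p \neq q$ collide under $\Phi_{f,g,\varphi}$ exactly when the $2d$ scalar equations $\varphi(F^{j\tau}p) = \varphi(F^{j\tau}q)$ all hold along their orbits. The off-diagonal has dimension $2(n_x+n_y)$, and the hypothesis $d \geq (n_x+n_y)+1$ makes $2d > 2(n_x+n_y)$, so a dimension count forces the coincidence set to be generically empty. Symmetrically, for immersivity I would work on the unit tangent bundle of $M_{xy}$, of dimension $2(n_x+n_y)-1$: the differential $d\Phi_{f,g,\varphi}$ loses rank only where some nonzero tangent vector is annihilated by all $2d$ pulled-back covectors $(DF^{j\tau})^{*}\,d\varphi$, a condition of codimension exceeding the source dimension and hence generically vacuous. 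The delicate region in both arguments is an orbit that is periodic of small period $k < 2d$: there the $2d$ delay samples land at only $k$ distinct orbit phases, so fewer than $2d$ of the equations (or covectors) are genuinely independent and the naive count collapses. This is precisely why the hypothesis restricts to $f$ whose periodic orbits of period $< 2d$ are isolated and have distinct eigenvalues --- isolation makes these orbits finite in number, so they can be treated one at a time, and the distinct-eigenvalue (non-resonance) condition on the linearized return map lets a perturbation supported near such an orbit still achieve the transversality required for the immersion and injectivity conditions restricted to it.

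The heart of the matter, and the step I expect to be the main obstacle, is verifying that enough perturbative freedom survives the restricted setting: unlike in the unforced theorem, where $\varphi$ is perturbed over the whole state manifold, here $\varphi$ sees only the driven factor and the dynamics one may perturb, $f$, acts only on the driving factor, so $(\delta f, \delta\varphi)$ influences $\Phi_{f,g,\varphi}$ only through the values of $\delta\varphi$ along the driven orbit and through the way $\delta f$ propagates, via the forcing, into that orbit. I would isolate this as a richness lemma: for all but a nowhere-dense set of $(f,\varphi)$, and at every non-periodic orbit segment (and at each isolated low-period orbit, via the eigenvalue hypothesis), the linear map sending $(\delta f,\delta\varphi)$ to the induced variation of the $2d$ delay coordinates --- respectively, of the $2d$ covectors --- is surjective. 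Granting this lemma, the parametric transversality theorem applies on the factor perturbation spaces exactly as in the classical proof, producing open and dense sets on which immersivity and injectivity respectively hold; their intersection is the claimed open and dense set of $(f,\varphi)$, and compactness of $M_{xy}$ promotes the resulting injective immersion to an embedding.
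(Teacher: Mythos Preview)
The paper does not prove this statement: Theorem~\ref{thm:forced-takens} is quoted as a background result from \cite{stark1999delay} and is used without proof, so there is no ``paper's own proof'' to compare your proposal against. Your outline is a reasonable high-level sketch of the transversality-and-genericity argument that Stark actually gives, but a comparison in the sense requested is moot here.

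One substantive remark on the proposal itself: you have misidentified what is perturbed. In the forced setting the driving dynamics $f \in D^2(M_x)$ is \emph{fixed} (subject to the periodic-orbit hypothesis), and genericity is taken over the \emph{driven} dynamics $g$ and the observation $\varphi$ on the driven factor. Your ``richness lemma'' posits freedom in $(\delta f,\delta\varphi)$, with $\delta f$ acting on the driving factor; that is not the perturbation class available, and indeed perturbing $f$ would not help because $f$ does not see the $y$-coordinate at all. The correct mechanism is that perturbations of $g$ propagate through the skew-product iterates $F^j$ into the $y$-component of the orbit, and together with perturbations of $\varphi$ on $M_y$ this supplies enough directions to make the evaluation/jet map a submersion away from low-period orbits. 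The periodic-orbit hypothesis on $f$ (not on $F$) is what controls the degenerate case where the \emph{base} orbit repeats, since then the forcing inputs to $g$ repeat and the available perturbation directions collapse. With that correction your codimension counts and the appeal to parametric transversality go through as you describe.
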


For our purposes, the upshot of Takens' theorem and the forced Takens' theorem is that for ``most'' well-behaved $(\phi,\varphi)$ pairs --- even when the measurement function $\varphi$ collects information only from the $y$ subsystem --- the delay embedding mapping $\Phi_{\varphi}$ is a bijection, and both $\Phi_{\varphi}$ and $\Phi_{\varphi}^{-1}$ are continuous.

\subsection{Stable Takens' theorem}
\label{sec:background-stable-takens}

Although Takens' theorem shows that $\Phi_{\varphi}$ is a homeomorphism between an attractor $M$ and delay embedding $N$, it does not guarantee that this map preserves distances: points close to each other on $M$ may map to points that are far from each other on $N$ and vice-versa. This limitation is significant when analyzing closeness-principle-based state-space methods because in practice these algorithms rely on comparing how various mappings preserve distances between point pairs.

To understand the implications of this obstacle we employ the \emph{stable Takens' theorems} of \cite{yap2011stable,eftekhari2018stabilizing}, which show that, under more restrictive conditions on the measurement function and dynamical system, the delay map $\Phi_{\varphi}$ can preserve distances. Specifically, we say that the mapping $\Phi_{\varphi}$ is a \emph{stable} (i.e., distance-preserving) embedding of $M$ with conditioning $\delta \in [0,1)$ if for all distinct $p, q \in M$ and a constant $C > 0$,
\begin{equation}
	C (1-\delta) \leq \frac{\norm{\Phi_{\varphi}(p)-\Phi_{\varphi}(q)}_2^2}{\norm{p-q}_2^2} \leq C (1+\delta).
	\label{eq:stable-embedding}
\end{equation}
We refer to the lower and upper bounds on the amount by which $\Phi_{\varphi}$ can expand or contract distances between point pairs --- here, $C (1-\delta)$ and $C (1+\delta)$ --- as the lower and upper isometry constants.

For clarity, we begin by considering the simpler \emph{linear} stable Takens' theorem \cite{yap2011stable}. We then derive more general results that apply to nonlinear systems. The linear stable Takens' theorem applies to a class of linear dynamical systems observed with a linear measurement procedure:

\begin{definition}[Class $\mathcal{A}(d)$ of oscillatory linear dynamical systems \cite{yap2011stable}]
	A linear dynamical system with system matrix $A \in \R^{n \times n}$ is said to be in class $\mathcal{A}(d)$, $d \leq \frac{n}{2}$, if $A$ is (a) real, full rank, and has distinct eigenvalues; and (b) has only $d$ strictly imaginary conjugate pairs of eigenvalues and the remaining eigenvalues have strictly negative real components. The strictly imaginary conjugate pairs of eigenvalues are denoted $\{ \pm j \theta_i \}_{i=1}^d$, $\theta_1, \dots, \theta_d > 0$; their corresponding eigenvectors are denoted $v_1, v_1^*, \dots, v_d, v_d^*$. We define $\Lambda = \mathrm{diag}\{ j\theta_1, -j\theta_1, \dots, j\theta_d, -j\theta_d \}$ and $V = \left[v_1, v_1^*, \dots, v_d, v_d^* \right] \in \mathbb{C}^{n \times 2d}$ so that $A V = V \Lambda$.
	\label{def:class-Ad}
\end{definition}

Define $A_1 = \lambda_{\min}(V^H V)$ and $A_2 = \lambda_{\max}(V^H V)$ to be the minimum and maximum eigenvalues of $V^H V$. Using a linear measurement function $\varphi(\cdot) = \left< h, \cdot \right> \colon \R^n \to \R$, denote the minimum and maximum alignment of system eigenvectors with the measurement function as $\kappa_1 = \min_{i = 1, \dots, d}~ \frac{1}{\norm{h}_2} \abs{v_i^H h}$ and $\kappa_2 = \max_{i = 1, \dots, d}~ \frac{1}{\norm{h}_2} \abs{v_i^H h}$. Finally, define
\begin{gather*}
	\nu = \max_{i \neq j} \biggl\{ \abs{\sin(\theta_i T_s)}^{-1},~ \abs{\sin\left(\frac{(\theta_i-\theta_j) T_s}{2}\right)}^{-1},~ \abs{\sin\left(\frac{(\theta_i+\theta_j) T_s}{2}\right)}^{-1} \biggr\}.
\end{gather*}

\begin{theorem}[Stable Takens' theorem for linear systems \cite{yap2011stable}]
	Let there be a linear dynamical system defined by system matrix $A \in \R^{n \times n}$ of class $\mathcal{A}(d)$ that has reached its attractor, a sampling interval $T_s > 0$, and a measurement function $\varphi(\cdot) = \left<h,\cdot\right> \colon \R^n \to \R$ such that $\norm{h}_2^2 = \frac{2d}{m}$. Suppose that (a) $m > (2d-1) \frac{A_2 \kappa_2^2}{A_1 \kappa_1^2} \nu$; (b) $\left\{ \pm j \theta_i \right\}$ are distinct and strictly complex; and (c) $v_i^H h \neq 0$ for all $i = 1, \dots, d$. Then for all distinct $p, q \in M$, the delay embedding map $\Phi_{\varphi}$ satisfies \eqref{eq:stable-embedding} with $C = d\left( \frac{\kappa_1^2}{A_2} + \frac{\kappa_2^2}{A_1} \right)$ and $\delta = \delta_0 + \delta_1(m)$, where
	\begin{equation*}
		\delta_0 = \frac{A_2 \kappa_2^2 - A_1 \kappa_1^2}{A_2 \kappa_2^2 + A_1 \kappa_1^2} \quad\text{and}\quad \delta_1(m) = \frac{(2d-1) \nu}{m} \left( \frac{2 A_2 \kappa_2^2}{A_2 \kappa_2^2 + A_1 \kappa_1^2} \right).
	\end{equation*}
	\label{thm:linear-stable-takens}
\end{theorem}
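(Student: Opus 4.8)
The plan is to exploit the fact that every map in the pipeline is linear. Because the system is linear and has settled onto its attractor, every state lies in the real $2d$-dimensional $A$-invariant subspace $\mathrm{span}\{v_1,v_1^*,\dots,v_d,v_d^*\}$, so for distinct $p,q\in M$ the difference $p-q$ lies in that subspace and can be written as $p-q=Vc$ for a (conjugate-symmetric) coefficient vector $c\in\mathbb{C}^{2d}$. Since $\Phi_\varphi$ restricted to this subspace is itself a linear map, the isometry ratio in \eqref{eq:stable-embedding} reduces to a Rayleigh-quotient computation, and the whole theorem becomes a bound on the extreme singular values of one structured matrix.

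Concretely, using $AV=V\Lambda$ one has $e^{kAT_s}(p-q)=Ve^{k\Lambda T_s}c$, so the $k$-th delay coordinate of $\Phi_\varphi(p)-\Phi_\varphi(q)$ is $h^\top V e^{k\Lambda T_s}c=\sum_i (v_i^\top h)\,e^{jk\theta_i T_s}c_i$, summed over the $2d$ frequencies $\pm\theta_1,\dots,\pm\theta_d$. Hence $\Phi_\varphi(p)-\Phi_\varphi(q)=FDc$, where $D=\mathrm{diag}(v_i^\top h)$ encodes the alignment of the measurement functional with the eigenvectors and $F\in\mathbb{C}^{m\times 2d}$ is the Vandermonde/DFT-type matrix $F_{k,i}=e^{jk\theta_i T_s}$. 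Writing $w=Dc$, the quantity to be bounded is
\[
\frac{\norm{\Phi_\varphi(p)-\Phi_\varphi(q)}_2^2}{\norm{p-q}_2^2}=\frac{w^H F^H F\,w}{c^H V^H V\,c}.
\]
I would then control four pieces separately: (i) $A_1\norm{c}_2^2\le c^H V^H V c\le A_2\norm{c}_2^2$, directly from Definition~\ref{def:class-Ad}; (ii) $\kappa_1^2\norm{h}_2^2\norm{c}_2^2\le\norm{w}_2^2\le\kappa_2^2\norm{h}_2^2\norm{c}_2^2$, since $\norm{w}_2^2=\sum_i\abs{v_i^\top h}^2\abs{c_i}^2$ and $\abs{v_i^\top h}=\abs{v_i^H h}=\kappa_i\norm{h}_2$ (condition (c) makes $\kappa_1>0$); (iii) the Gram matrix $F^H F=mI+E$, where $E$ is Hermitian with zero diagonal and off-diagonal entries equal to truncated geometric sums $\sum_{k=0}^{m-1}e^{jk(\theta_i\mp\theta_{i'})T_s}$ of modulus at most $\bigl|\sin\tfrac{(\theta_i\mp\theta_{i'})T_s}{2}\bigr|^{-1}$ — the three possible frequency gaps being $2\theta_i$, $\theta_i-\theta_j$, and $\theta_i+\theta_j$, which is precisely what $\nu$ collects — so that $\norm{E}_2\le(2d-1)\nu$ by a Gershgorin bound and $(m-(2d-1)\nu)\norm{w}_2^2\le w^H F^H F w\le(m+(2d-1)\nu)\norm{w}_2^2$; and (iv) assumption (a) forces $m-(2d-1)\nu>0$, so the lower isometry constant is strictly positive (hence $\Phi_\varphi$ is injective).

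Chaining the worst cases of (i)--(iv) and substituting $\norm{h}_2^2=2d/m$ gives an upper ratio $\le\frac{2d\kappa_2^2}{A_1}\bigl(1+\tfrac{(2d-1)\nu}{m}\bigr)$ and a lower ratio $\ge\frac{2d\kappa_1^2}{A_2}\bigl(1-\tfrac{(2d-1)\nu}{m}\bigr)$; a short algebraic identity check then shows $C(1+\delta_0)=\tfrac{2d\kappa_2^2}{A_1}$ and $C\delta_1(m)=\tfrac{2d\kappa_2^2}{A_1}\tfrac{(2d-1)\nu}{m}$ for the stated $C=d(\kappa_1^2/A_2+\kappa_2^2/A_1)$, so the upper bound is exactly $C(1+\delta)$, and the lower bound is at least $C(1-\delta)$ (the symmetric $\delta$ is a mild relaxation on the contraction side). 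I expect the main obstacle to be step (iii): reducing the off-diagonal entries of $F^HF$ to Dirichlet-kernel-type sums, correctly enumerating the three frequency-gap cases that define $\nu$ — in particular the $2\theta_i$ term arising from pairing $v_i$ with its conjugate $v_i^*$ — and extracting a clean operator-norm bound on $E$. A secondary bookkeeping hazard is keeping the complex eigenbasis and the conjugate-symmetry constraint on $c$ straight so that all the quadratic forms remain real and the identity $\abs{v_i^\top h}=\abs{v_i^H h}$ is invoked correctly; this is not deep, but it is where sign and indexing errors would most easily creep in.
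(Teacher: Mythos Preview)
The paper does not actually prove Theorem~\ref{thm:linear-stable-takens}: it is quoted as background from \cite{yap2011stable} and is used, not established, in this manuscript. So there is no ``paper's own proof'' to compare against here; the proofs in the appendices are for Propositions~\ref{thm:linear-forced-stable-takens}, \ref{thm:causality-stable-takens-1}, and \ref{thm:causality-stable-takens-iff} only.

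That said, your sketch is the standard linear-algebraic argument for this result and is essentially the one in \cite{yap2011stable}: reduce to the $2d$-dimensional invariant subspace $\mathrm{range}(V)$, write $\Phi_\varphi(p)-\Phi_\varphi(q)=FDc$ with $F$ a Vandermonde/Dirichlet matrix in the frequencies $\pm\theta_i T_s$, bound the denominator via $\lambda_{\min/\max}(V^HV)$, bound $D$ via $\kappa_1,\kappa_2$, and control $F^HF=mI+E$ by a Gershgorin/geometric-sum estimate that produces the three sine terms packaged into $\nu$. Your closing algebra --- that $C(1+\delta_0)=2d\kappa_2^2/A_1$, $C\delta_1(m)=\tfrac{2d\kappa_2^2}{A_1}\tfrac{(2d-1)\nu}{m}$, and that the asymmetric contraction bound $\tfrac{2d\kappa_1^2}{A_2}\bigl(1-\tfrac{(2d-1)\nu}{m}\bigr)$ dominates $C(1-\delta)$ because $\kappa_1^2/A_2\le\kappa_2^2/A_1$ --- is correct, as is the observation that condition~(a) is exactly what forces $\delta_1<1-\delta_0$ and hence $\delta<1$. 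Two cosmetic points: the delay map in \eqref{eq:delay-embedding} uses \emph{past} lags, so the flow factor is $e^{-kAT_s}$ rather than $e^{kAT_s}$, but on the purely imaginary spectrum this only conjugates the columns of $F$ and leaves $F^HF$ unchanged; and the ``$|\sin(\theta_i T_s)|^{-1}$'' entry of $\nu$ indeed comes from the $v_i$--$v_i^*$ cross term with frequency gap $2\theta_i$, as you flagged.
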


\section{Theoretical results}
\label{sec:theory}

In this section, we provide analytical results related to the closeness principle. We begin in Section \ref{sec:theory-linear-phiy} by focusing on the key mapping $\Phi_{\varphi_y} \colon M_{xy} \to N_y$ (see Figure \ref{fig:mappings-all}) in the restricted class of linear systems $\mathcal{A}(d)$ described by Definition \ref{def:class-Ad}. Using both a proof of concept and a formalized result, we show that the stability of $\Phi_{\varphi_y}$ depends on the causal structure relating $x$ and $y$. Section \ref{sec:theory-tests} then builds on this intuition, developing a test for causal interaction based on the expansivity of $\Psi_{y \to x}$ for general (potentially nonlinear) systems.

\subsection{Causal structure affects stability of $\varphi_y$ in linear systems}
\label{sec:theory-linear-phiy}

We begin by exploring the mapping $\Phi_{\varphi_y} \colon M_{xy} \to N_y$ (see Figure \ref{fig:mappings-all}) in the restricted class of linear systems $\mathcal{A}(d)$. Specifically, we show that $\Phi_{\varphi_y}$ cannot be stable when $x \not\to y$. This formalizes the key intuition that a delay embedding of $M_{xy}$ constructed with only information about $y$ can only be stable when the lags of $y$ ``contain information about the full system'' because $x \to y$.

In this subsection we consider the class of linear coupled systems in which $x \to y$ but $y \not\to x$,
\begin{equation}
	\begin{bmatrix} \dot{x} \\ \dot{y} \end{bmatrix} = \begin{bmatrix} A_{xx} & 0 \\ A_{yx} & A_{yy} \end{bmatrix} \begin{bmatrix} x \\ y \end{bmatrix}.
	\label{eq:linear-forced-system}
\end{equation}
Here $A_{xx} \in \R^{n_x \times n_x}$ describes the internal dynamics of $x$, $A_{yy} \in \R^{n_y \times n_y}$ describes the internal dynamics of $y$, and $A_{yx} \in \R^{n_y \times n_x}$ describes the external dynamics of $y$ due to $x$'s forcing.

\subsubsection{Example: A simple system analytically satisfies the closeness principle}
\label{sec:theory-linear-phiy-example}

\begin{figure}[t]
	\centering
	\begin{tikzpicture}
		\matrix (m) [matrix of math nodes, row sep=4em, column sep=6em, minimum width=4em, minimum height=3em, nodes={draw=black, minimum height=2.5em}] { M_x & M_{xy} & M_y \\ N_x & & N_y \\ };
		\path[-{Latex[length=0.7em]}]
		(m-1-1) edge node [left] {$\Phi_{\gamma_x}$} (m-2-1)
		(m-1-3) edge node [right] {$\Phi_{\gamma_y}$} (m-2-3)
		(m-1-2) edge node [left] {$\Phi_{\varphi_x}$} (m-2-1)
		(m-1-2) edge node [right] {$\Phi_{\varphi_y}$} (m-2-3)
		(m-1-2) edge[transform canvas={yshift=1mm}] node [above] {$\pi_x$} (m-1-1)
		(m-1-1) edge[transform canvas={yshift=-1mm}] node [below] {$\iota_y$} (m-1-2)
		(m-1-2) edge[transform canvas={yshift=1mm}] node [above] {$\pi_y$} (m-1-3)
		(m-1-3) edge[transform canvas={yshift=-1mm}] node [below] {$\iota_x$} (m-1-2)
		(m-2-3) edge [dashed, transform canvas={yshift=1mm}] node [above] {$\Psi_{y \to x}$} (m-2-1)
		(m-2-1) edge [dashed, transform canvas={yshift=-1mm}] node [below] {$\Psi_{x \to y}$} (m-2-3);
	\end{tikzpicture}
	\caption{Mappings between attractors and delay embedding using the observation functions we consider.}
	\label{fig:mappings-all}
\end{figure}
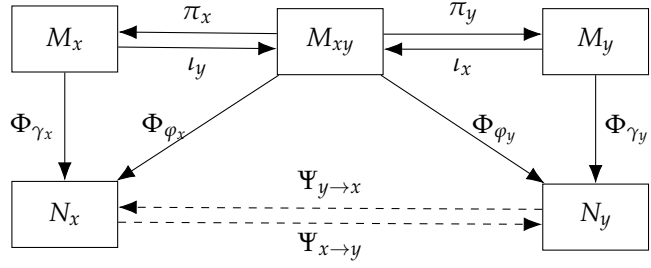

The following example constructs a forced linear system with linear measurement functions by extending a simple autonomous system from \cite{yap2011stable}:

\begin{example}[Well-behaved forced linear system in which $x \to y$]
	Let $A = V \Lambda V^{-1} \in \R^{4 \times 4}$ be the system matrix of a forced system in $\mathcal{A}(d=2)$ [see \eqref{eq:linear-forced-system}] with matrix of eigenvectors
	\begin{equation*}
		V = \begin{bmatrix} \frac{1}{\sqrt{2}} v_{xx} & \frac{1}{\sqrt{2}} v_{xx}^* & 0 & 0 \\ \frac{1}{\sqrt{2}} v_{yx} & \frac{1}{\sqrt{2}} v_{yx}^* & v_{yy} & v_{yy}^* \end{bmatrix},
	\end{equation*}
	where $v_{xx} = v_{yx} = v_{yy} = \frac{1}{\sqrt{2}} \begin{bmatrix} 1 \\ j \end{bmatrix}$ and $j = \sqrt{-1}$; matrix of eigenvalues
	\begin{equation*}
		\Lambda = \mathrm{diag}\left( \left[ j \theta_1,~ -j \theta_1,~ j \theta_2,~ -j \theta_2 \right] \right),
	\end{equation*}
	where $\theta_1 = 2.3129$, $\theta_2 = 0.1765$; and initial condition $(x_0,y_0) = V \cdot 1_{4 \times 1}$. (This initial condition ensures that the system does not exhibit transient behavior.)
	
	As in \cite{yap2011stable} we use linear measurement functions $\varphi$ of the form $\varphi(\cdot) = \left< h, \cdot \right>$, with vectors $h$ constructed to approximately align with the eigenvectors of $A$ as
	\begin{gather*}
		c_{\varphi_{xy}} = (1+r_1) \mathrm{Re}(V_1) + (1+r_2) \mathrm{Im}(V_1) + (1+r_3) \mathrm{Re}(V_3) + (1+r_4) \mathrm{Im}(V_3),
	\end{gather*}
	where $V_1$ and $V_3$ denote the first and third columns of $V$, and $r_i \sim \N(0,0.1)$ for $i = 1, \dots, 4$. We construct normalized $h_{\varphi_{xy}} \in \R^{n_x+n_y}$, $h_{\varphi_x} \in \R^{n_x+n_y}$, $h_{\varphi_y} \in \R^{n_x+n_y}$, $h_{\gamma_x} \in \R^{n_x}$, and $h_{\gamma_y} \in \R^{n_y}$ by first extracting or setting to zero the appropriate parts of $c_{\varphi_{xy}}$,
	\begin{gather*}
		c_{\varphi_x} = \begin{bmatrix} c_{\varphi_{xy}}[1] \\ c_{\varphi_{xy}}[2] \\ 0 \\ 0 \end{bmatrix},~
		c_{\varphi_y} = \begin{bmatrix} 0 \\ 0 \\ c_{\varphi_{xy}}[3] \\ c_{\varphi_{xy}}[4] \end{bmatrix}, ~
		c_{\gamma_x} = \begin{bmatrix} c_{\varphi_{xy}}[1] \\ c_{\varphi_{xy}}[2] \end{bmatrix},~
		c_{\gamma_y} = \begin{bmatrix} c_{\varphi_{xy}}[3] \\ c_{\varphi_{xy}}[4] \end{bmatrix},
	\end{gather*}
	and then normalizing so that
	\begin{gather*}
		h_{\varphi_{xy}} = \sqrt{\frac{4}{m}} \frac{c_{\varphi_{xy}}}{\norm{c_{\varphi_{xy}}}_2},~
		h_{\varphi_x} = \sqrt{\frac{4}{m}} \frac{c_{\varphi_x}}{\norm{c_{\varphi_x}}_2},~
		h_{\varphi_y} = \sqrt{\frac{4}{m}} \frac{c_{\varphi_y}}{\norm{c_{\varphi_y}}_2},~
		h_{\gamma_x} = \sqrt{\frac{2}{m}} \frac{c_{\gamma_x}}{\norm{c_{\gamma_x}}_2},~
		\text{and}~ h_{\gamma_y} = \sqrt{\frac{2}{m}} \frac{c_{\gamma_y}}{\norm{c_{\gamma_y}}_2}.
	\end{gather*}
	These measurement functions are used to construct the delay embeddings shown in Figure \ref{fig:mappings-all}: $N_{xy} = \Phi_{\varphi_{xy}}(M_{xy})$, $N_x = \Phi_{\varphi_x}(M_{xy}) = \Phi_{\gamma_x}(M_x)$, and $N_y = \Phi_{\varphi_y}(M_{xy}) = \Phi_{\gamma_y}(M_y)$. Each embedding is created using $m$ time delays of $\tau = 1$.
	\label{ex:linear-forced-system}
\end{example}

The system of Example \ref{ex:linear-forced-system} consists of an autonomous subsystem $x$ and a forced subsystem $y$. For the autonomous subsystem $x$ defined by system matrix $A_{xx}$, a straightforward computation reveals that the relevant quantities in Theorem \ref{thm:linear-stable-takens} are $C = 1$ and $\lim_{m \to \infty} \delta(m) = 0$. Theorem \ref{thm:linear-stable-takens} thus guarantees that the embedding $\Phi_{\gamma_x} \colon M_x \to N_x$ is an exact isometry as $m \to \infty$. Theorem \ref{thm:linear-stable-takens} can similarly be used to bound the stability of $\Phi_{\varphi_{xy}}$, $\Phi_{\varphi_x}$, and $\Phi_{\varphi_y}$. (The stability of $\Phi_{\gamma_y}$ cannot be bounded analytically because the $y$ subsystem cannot be separated from the driving $x$ subsystem.)

\begin{table*}
	\centering
\begin{tabular}{@{}ccccccccc@{}}
\toprule
& ~ & \multicolumn{2}{c}{Lower isometry constant} & \multicolumn{2}{c}{Upper isometry constant} & ~ & \multicolumn{2}{c}{Ratio of isometry constants} \\
\cmidrule(lr){3-4}\cmidrule(lr){5-6}\cmidrule(lr){8-9}Mapping & & Theoretical & Empirical & Theoretical & Empirical & & Theoretical & Empirical \\
\midrule
$\Phi_{\varphi_{xy}} \colon M_{xy} \to N_{xy}$ & & 0.60 & 1.00 & 6.23 & 5.83 & & 10.35 & 5.84 \\
$\Phi_{\varphi_x} \colon M_{xy} \to N_x$ & & \textless10\textsuperscript{-3} & \textless10\textsuperscript{-3} & 6.23 & 2.00 & & \textgreater10\textsuperscript{3} & \textgreater10\textsuperscript{3} \\
$\Phi_{\varphi_y} \colon M_{xy} \to N_y$ & & 0.12 & 0.76 & 7.30 & 5.24 & & 61.25 & 6.87 \\
$\Phi_{\gamma_x} \colon M_x \to N_x$ & & 0.99 & 1.00 & 1.01 & 1.00 & & 1.01 & 1.00 \\
$\Phi_{\gamma_y} \colon M_y \to N_y$ & & -- & 0.50 & -- & \textgreater10\textsuperscript{3} & & -- & \textgreater10\textsuperscript{3} \\
$\Psi_{x \to y} \colon N_x \to N_y$ & & 0.12 & 1.00 & -- & \textgreater10\textsuperscript{3} & & -- & \textgreater10\textsuperscript{3} \\
$\Psi_{y \to x} \colon N_y \to N_x$ & & -- & \textless10\textsuperscript{-3} & 8.44 & 1.00 & & -- & \textgreater10\textsuperscript{3} \\
\bottomrule
\end{tabular}

	\caption{Analytical and empirical isometry constants for the linear forced system of Example \ref{ex:linear-forced-system} and $m = 250$.}
	\label{tab:linear-forced-system-isoconst}
\end{table*}

To compare these theoretical bounds to their empirical values, we simulate Example \ref{ex:linear-forced-system} for 10,000 time steps and compute the isometry constants for the maps in Figure \ref{fig:mappings-all} using 50,000 randomly sampled point pairs $(p,q)$. To construct delay embeddings we use $m = 250$ time delays so that the isometry constants are limited more by structural properties of the system than by an insufficient value of $m$. These analytical bounds and empirically computed isometry constants for the system in Example \ref{ex:linear-forced-system} and maps between attractors and delay embeddings shown in Figure \ref{fig:mappings-all} are shown in Table \ref{tab:linear-forced-system-isoconst}. Observe that, as expected, the mapping of the autonomous subsystem $\Phi_{\gamma_x}$ becomes an exact isometry as $m \to \infty$, while the mapping of the forced subsystem $\Phi_{\gamma_y}$ does not.

A key observation about this system is that the stability of the mappings from $M_{xy}$ to $N_x$ and $N_y$ reveals information about underlying causal structure: the mapping $\Phi_{\varphi_y} \colon M_{xy} \to N_y$, which informally produces a delay embedding that ``contains information about both $x$ and $y$'' because $x \to y$, is stable \emph{despite being constructed from only measurements of the $y$ subsystem}. However, the mapping $\Phi_{\varphi_x} \colon M_{xy} \to N_x$, which informally produces a delay embedding that ``contains information only about $x$,'' is not stable.

This intuition carries to the relative stability of the contemporaneous point mappings $\Psi_{x \to y} \colon N_x \to N_y$ and $\Psi_{y \to x} \colon N_y \to N_x$, which are used by algorithms described in Section \ref{sec:background-causal-inference}. Table \ref{tab:linear-forced-system-isoconst} shows that the intuition of the closeness principle indeed holds in this simple example where $x \to y$. The map $\Psi_{y \to x}$ has bounded \emph{expansiveness}: pairs of points that are close on $N_y$ are also close on $N_x$. By contrast, $\Psi_{x \to y}$ has bounded \emph{contractiveness}: pairs of points that are close on $N_y$ may be far apart on $N_x$.

\subsubsection{Causal structure affects stability of $\varphi_y$}
\label{sec:theory-linear-phiy-result}

We next formalize the intuition that the stability of $\Phi_{\varphi_y}$ depends on causal structure in the restricted setting of functions in $\mathcal{A}(d)$. The proof demonstrates that while $\Phi_{\varphi_y}$'s stability indeed depends on whether or not $x \to y$ in the underlying system, it is also impacted by properties of the system matrix and its relationship to the measurement operator $\varphi_y$.

\begin{proposition}
	Let $(x,y)$ be a forced linear dynamical system [i.e., of form \eqref{eq:linear-forced-system}] in class $\mathcal{A}(d)$. Denote this system's attractor by $M_{xy}$, and let its delay embedding $N_y = \Phi_{\varphi_y}(M_{xy})$ be constructed with a linear observation function $\varphi_y: M_{xy} \to \R$ that ``uses only information about $y$.'' (That is, let the measurement function $\varphi_y$ take the form $\varphi_y(\cdot) = \left< h_{\gamma_y}, \pi_y(\cdot) \right>$, where $\pi_y$ is the projection $(x,y) \mapsto y$ and $h_{\gamma_y} \in \R^{n_y}$.) Then:
	\begin{itemize}
		\item When $x \not\to y$ in the underlying system, $\Phi_{\varphi_y}$ is not a stable map.
		\item When $x \to y$ in the underlying system, $\Phi_{\varphi_y}$ may be a stable map.
	\end{itemize}
	\label{thm:linear-forced-stable-takens}
\end{proposition}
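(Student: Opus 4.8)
The plan is to prove the two bullets separately. The first is the substantive direction, and I would establish it by showing that, when $x\not\to y$, the delay map $\Phi_{\varphi_y}$ \emph{factors through the projection onto $y$} and is therefore non-injective, which is incompatible with the lower isometry bound in \eqref{eq:stable-embedding}. The second bullet is a ``may''-statement and is already witnessed by Example \ref{ex:linear-forced-system} together with Theorem \ref{thm:linear-stable-takens}.

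For the first bullet: when $x\not\to y$, the off-diagonal block $A_{yx}$ in \eqref{eq:linear-forced-system} vanishes, so the system matrix is block diagonal and the flow splits, $\phi^t(x,y) = (e^{tA_{xx}}x,\, e^{tA_{yy}}y)$. Since $\varphi_y(\cdot) = \langle h_{\gamma_y},\pi_y(\cdot)\rangle$ sees its argument only through $\pi_y$, and the $y$-component of the flow does not involve $x$, every coordinate of $\Phi_{\varphi_y}$ evaluated at $(x,y)$ is a function of $y$ alone. Hence $\Phi_{\varphi_y} = \Phi_{\gamma_y}\circ\pi_y$, where $\Phi_{\gamma_y}\colon M_y\to N_y$ is the delay map of the autonomous $y$ subsystem and $\pi_y\colon M_{xy}\to M_y$ is the projection $(x,y)\mapsto y$. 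I would then note that $\pi_y$ is not injective on $M_{xy}$: because $A_{xx}$ itself carries at least one strictly-imaginary eigenvalue pair --- i.e., $M_x$ is a nontrivial attractor --- one can vary the phase of an $x$-oscillation while holding all $y$-phases fixed, giving distinct $p,q\in M_{xy}$ with $\pi_y(p)=\pi_y(q)$. For such a pair $\norm{\Phi_{\varphi_y}(p)-\Phi_{\varphi_y}(q)}_2 = 0$ while $\norm{p-q}_2>0$, violating the requirement $\norm{\Phi_{\varphi_y}(p)-\Phi_{\varphi_y}(q)}_2^2 \ge C(1-\delta)\norm{p-q}_2^2 > 0$ of \eqref{eq:stable-embedding} for every $C>0$, $\delta\in[0,1)$. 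Thus $\Phi_{\varphi_y}$ is not stable.

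For the second bullet it suffices to produce one system of the form \eqref{eq:linear-forced-system} with $A_{yx}\neq 0$ for which $\Phi_{\varphi_y}$ is stable, and Example \ref{ex:linear-forced-system} (see Table \ref{tab:linear-forced-system-isoconst}) does this, with stability certified analytically by Theorem \ref{thm:linear-stable-takens}. To explain why coupling removes the obstruction: when $A_{yx}\neq 0$ the eigenvector of $A$ associated with an eigenvalue $\mu$ of $A_{xx}$ has the form $(u, w)$ with $w = -(A_{yy}-\mu I)^{-1}A_{yx}u$, whose $y$-block is nonzero for generic $A_{yx}$, and the eigenvectors inherited from $A_{yy}$ already have nonzero $y$-block. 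Hence every strictly-imaginary eigenvector $v_i$ satisfies $\pi_y(v_i)\neq 0$, so a generic measurement direction $h_{\gamma_y}$ makes $\langle \pi_y(v_i), h_{\gamma_y}\rangle\neq 0$ for all $i$ --- hypothesis (c) of Theorem \ref{thm:linear-stable-takens} --- and hypothesis (a) holds once $m$ is large enough, so the theorem yields \eqref{eq:stable-embedding} for $\Phi_{\varphi_y}$.

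The main obstacle is making the non-injectivity of $\pi_y$ in the first bullet fully rigorous. In principle the joint attractor $M_{xy}$ need not be the product $M_x\times M_y$: if the oscillation frequencies of $x$ were all rationally commensurate with those of $y$, $M_{xy}$ could collapse to a lower-dimensional torus on which the $x$-state is a single-valued function of the $y$-state, and then $\pi_y$ would be injective. I would handle this by working in the generic regime of $\mathcal{A}(d)$ (rationally independent frequencies, so that $M_{xy}=M_x\times M_y$) or by adding this mild non-degeneracy as an explicit hypothesis; under it the construction of the colliding pair $p,q$, and hence the failure of \eqref{eq:stable-embedding}, is immediate.
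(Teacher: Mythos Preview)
Your approach is essentially the same as the paper's: both establish that when $A_{yx}=0$ the delay map $\Phi_{\varphi_y}$ depends only on the $y$-coordinates (the paper by writing out the matrix representation \eqref{eq:linear-forced-stable-takens-matrix} and showing its first $n_x$ columns vanish, you by observing the flow decouples so that $\Phi_{\varphi_y}=\Phi_{\gamma_y}\circ\pi_y$), hence is non-injective and violates the lower bound in \eqref{eq:stable-embedding}; and both handle the second bullet by appealing to Theorem~\ref{thm:linear-stable-takens}. Your explicit attention to whether the colliding pair $p,q$ can actually be taken on the attractor $M_{xy}$---and the rationally-independent-frequency caveat needed to ensure $M_{xy}=M_x\times M_y$---is in fact more careful than the paper's own argument, which simply invokes the nontrivial nullspace of the linear operator on the ambient $\R^{n_x+n_y}$ without checking that the kernel direction is realized on the attractor.
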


The proof is located in Appendix \ref{sup:proofs-linear-forced-stable-takens}. Proposition \ref{thm:linear-forced-stable-takens} makes precise the intuition described in Example \ref{ex:linear-forced-system}: when constructing a delay embedding using measurements of only one variable of a dynamical system, geometry is preserved only when that one variable ``contains information about'' the complete system. That is, in a system $(x,y)$ where $x \to y$, a delay embedding constructed with only observations of $y$ may be geometry-preserving, but one constructed with only observations of $x$ cannot be.

\subsection{Tests for causal interaction based on expansivity of $\Psi_{y \to x}$}
\label{sec:theory-tests}

We next move closer to examining the map $\Psi_{y \to x}$ that is considered by the closeness principle. The results in this subsection, outlined in Figure \ref{fig:theory-overview}, relate the inter-point distances on various embeddings to causal interaction. The results are stated in terms of the distance preservation of the mappings shown in Figure \ref{fig:mappings-all}; recall that when discussing these mappings we assume that they are operating on systems that have reached their attractors (i.e., that observed time series do not contain transients). The results in this section are not restricted to linear systems.

\begin{figure}[t]
	\centering
	\includegraphics[width=0.7\textwidth]{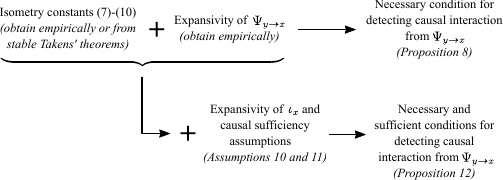}
	\caption{Overview of theoretical results in Section \ref{sec:theory-tests} relating distance preservation \eqref{eq:stabletakens-Phi-gammax}--\eqref{eq:stabletakens-Phi-phiy} to conditions for detecting causal interaction from $\Psi_{y \to x}$.}
	\label{fig:theory-overview}
\end{figure}

\subsubsection{Necessary conditions}

Suppose we have analytical or empirical bounds on the stability of the maps $\Phi_{\gamma_x}$, $\Phi_{\gamma_y}$, $\Phi_{\varphi_x}$, and $\Phi_{\varphi_y}$ (see Figure \ref{fig:mappings-all}) in hand:
\begin{align}
	l_{\gamma_x} &\leq \frac{\norm{\Phi_{\gamma_x}(p)-\Phi_{\gamma_x}(q)}_2^2}{\norm{p-q}_2^2} \leq u_{\gamma_x} \quad\forall~ p,q \in M_x, \label{eq:stabletakens-Phi-gammax} \\
	l_{\gamma_y} &\leq \frac{\norm{\Phi_{\gamma_y}(p)-\Phi_{\gamma_y}(q)}_2^2}{\norm{p-q}_2^2} \leq u_{\gamma_y} \quad\forall~ p,q \in M_y, \label{eq:stabletakens-Phi-gammay} \\
	l_{\varphi_x} &\leq \frac{\norm{\Phi_{\varphi_x}(p)-\Phi_{\varphi_x}(q)}_2^2}{\norm{p-q}_2^2} \leq u_{\varphi_x} \quad\forall~ p,q \in M_{xy}, \label{eq:stabletakens-Phi-phix} \\
	l_{\varphi_y} &\leq \frac{\norm{\Phi_{\varphi_y}(p)-\Phi_{\varphi_y}(q)}_2^2}{\norm{p-q}_2^2} \leq u_{\varphi_y} \quad\forall~ p,q \in M_{xy}. \label{eq:stabletakens-Phi-phiy}
\end{align}

Many tests in the literature (see Section \ref{sec:background-causal-inference}) are stated in terms of properties of the map between delay embeddings $\Psi_{y \to x} \colon N_y \to N_x$. The following result provides a test for causal relationships using the stability of this map:

\begin{proposition}[Necessary condition for detecting causal interaction using the stable Takens' theorem]
	Suppose the measurement operators $\gamma_x \colon M_x \to N_x$, $\gamma_y \colon M_y \to N_y$, $\varphi_x \colon M_{xy} \to N_x$, and $\varphi_y \colon M_{xy} \to N_y$ are in the generic sets that satisfy Takens' theorem for systems $x$, $y$, and $(x,y)$. Further suppose these measurement functions applied to these dynamical systems satisfy \eqref{eq:stabletakens-Phi-gammax}--\eqref{eq:stabletakens-Phi-phiy} with isometry constants $l_{\varphi_x}$, $l_{\varphi_y}$, $u_{\gamma_x}$, and $u_{\gamma_y}$. Then:
	\begin{itemize}
		\item If $x \to y$, then
		\begin{equation}
			\frac{\norm{\Psi_{y \to x}(p) - \Psi_{y \to x}(q)}_2^2}{\norm{p-q}_2^2} \leq \frac{u_{\gamma_x}}{l_{\varphi_y}}
		\end{equation}
		for all $p, q \in N_y$.
		\item If $y \to x$, then
		\begin{equation}
			\frac{\norm{\Psi_{y \to x}(p) - \Psi_{y \to x}(q)}_2^2}{\norm{p-q}_2^2} \geq \frac{l_{\varphi_x}}{u_{\gamma_y}}
		\end{equation}
		for all $p, q \in N_y$.
	\end{itemize}
	\label{thm:causality-stable-takens-1}
\end{proposition}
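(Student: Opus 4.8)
The plan is to exploit the factorizations of $\Psi_{y\to x}$ and $\Psi_{x\to y}$ through the attractors (Figures~\ref{fig:mappings} and~\ref{fig:mappings-all}) and simply chain the isometry bounds \eqref{eq:stabletakens-Phi-gammax}--\eqref{eq:stabletakens-Phi-phiy}, using one elementary fact: a coordinate projection is non-expansive in $\ell_2$, i.e.\ $\norm{\pi_x(u)-\pi_x(v)}_2 \le \norm{u-v}_2$ for $u,v\in M_{xy}$, and likewise for $\pi_y$, since deleting coordinates removes only nonnegative terms from the squared norm. The genericity hypothesis --- which, for measurement functions seeing only one subsystem, is exactly what the forced Takens' theorem (Theorem~\ref{thm:forced-takens}) makes attainable --- supplies the invertibility needed: when $x\to y$, $\Phi_{\varphi_y}\colon M_{xy}\to N_y$ is a bijection; when $y\to x$, $\Phi_{\varphi_x}\colon M_{xy}\to N_x$ is a bijection.

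\emph{Case $x\to y$.} Here $\Psi_{y\to x}=\Phi_{\gamma_x}\circ\pi_x\circ\Phi_{\varphi_y}^{-1}\colon N_y\to N_x$ is a genuine function. Fix distinct $p,q\in N_y$, put $\tilde p=\Phi_{\varphi_y}^{-1}(p)$ and $\tilde q=\Phi_{\varphi_y}^{-1}(q)$ in $M_{xy}$ (distinct, as $\Phi_{\varphi_y}$ is injective), and estimate
\begin{align*}
	\norm{\Psi_{y\to x}(p)-\Psi_{y\to x}(q)}_2^2
	&= \norm{\Phi_{\gamma_x}(\pi_x(\tilde p))-\Phi_{\gamma_x}(\pi_x(\tilde q))}_2^2
	\le u_{\gamma_x}\,\norm{\pi_x(\tilde p)-\pi_x(\tilde q)}_2^2 \\
	&\le u_{\gamma_x}\,\norm{\tilde p-\tilde q}_2^2
	\le \frac{u_{\gamma_x}}{l_{\varphi_y}}\,\norm{\Phi_{\varphi_y}(\tilde p)-\Phi_{\varphi_y}(\tilde q)}_2^2
	= \frac{u_{\gamma_x}}{l_{\varphi_y}}\,\norm{p-q}_2^2,
\end{align*}
where the three inequalities are, in order, the upper bound in \eqref{eq:stabletakens-Phi-gammax} at $\pi_x(\tilde p),\pi_x(\tilde q)\in M_x$, non-expansiveness of $\pi_x$, and the lower bound in \eqref{eq:stabletakens-Phi-phiy} rearranged; dividing by $\norm{p-q}_2^2>0$ gives the first claim. (If $\pi_x(\tilde p)=\pi_x(\tilde q)$ the first inequality reads $0\le0$ and the bound still holds.)

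\emph{Case $y\to x$.} Now it is $\Psi_{x\to y}=\Phi_{\gamma_y}\circ\pi_y\circ\Phi_{\varphi_x}^{-1}\colon N_x\to N_y$ that is a genuine function; running the identical three-step estimate with $(\gamma_x,\varphi_y,\pi_x)\mapsto(\gamma_y,\varphi_x,\pi_y)$ yields $\norm{\Psi_{x\to y}(a)-\Psi_{x\to y}(b)}_2^2\le (u_{\gamma_y}/l_{\varphi_x})\,\norm{a-b}_2^2$ for all distinct $a,b\in N_x$. Taking $a,b$ to be contemporaneous images of $p,q$ under $\Psi_{y\to x}$ --- so that $\Psi_{x\to y}(a)=p$, $\Psi_{x\to y}(b)=q$, and $a\ne b$ whenever $p\ne q$ (since $a=b$ would force the underlying full states to coincide, $\Phi_{\varphi_x}$ being injective on $M_{xy}$, and hence $p=q$) --- and inverting the inequality gives $\norm{\Psi_{y\to x}(p)-\Psi_{y\to x}(q)}_2^2\ge (l_{\varphi_x}/u_{\gamma_y})\,\norm{p-q}_2^2$, the second claim.

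The arithmetic is just three chained inequalities, so the one place I would slow down is bookkeeping about which composition legitimately \emph{is} $\Psi_{y\to x}$: when $x\to y$ it is an honest map because $\Phi_{\varphi_y}$ embeds $M_{xy}$, whereas when $y\to x$ the map $\Phi_{\varphi_y}$ is generally not injective on $M_{xy}$, so $\Psi_{y\to x}$ must be read as the (possibly multivalued) converse of the honest map $\Psi_{x\to y}$ and the inequality interpreted over all contemporaneous pairs. The second point worth stating explicitly is the non-expansiveness of $\pi_x,\pi_y$: this is the single step that injects the causal asymmetry --- information discarded in passing $M_{xy}\to M_x$ --- into an otherwise symmetric chain of bounds.
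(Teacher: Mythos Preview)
Your proof is correct and follows essentially the same route as the paper's: factor $\Psi_{y\to x}$ through $M_{xy}$ and chain the isometry bounds \eqref{eq:stabletakens-Phi-gammax}--\eqref{eq:stabletakens-Phi-phiy} together with the trivial fact that coordinate projection is nonexpansive. The only cosmetic difference is in the $y\to x$ case: the paper writes $\Psi_{y\to x}=\Phi_{\varphi_x}\circ\iota_x\circ\Phi_{\gamma_y}^{-1}$ directly and uses that the inclusion $\iota_x$ is \emph{noncontractive}, whereas you bound the honest forward map $\Psi_{x\to y}=\Phi_{\gamma_y}\circ\pi_y\circ\Phi_{\varphi_x}^{-1}$ from above using nonexpansiveness of $\pi_y$ and then invert over contemporaneous pairs. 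Since $\iota_x$ noncontractive is exactly $\pi_y$ nonexpansive read backwards, these are the same three-link chain traversed in opposite directions; your version has the minor advantage of invoking only one elementary fact (projections shrink $\ell_2$ norms), while the paper's keeps $\Psi_{y\to x}$ as the object bounded throughout and so avoids your bookkeeping about when the contemporaneous correspondence really inverts $\Psi_{x\to y}$.
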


The proof, which applies the isometry constants to the compositions of mappings that can comprise $\Psi_{y \to x}$, is located in Appendix \ref{sup:proofs-causality-stable-takens-1}. The result admits the following test for causal interaction as a simple corollary:
\begin{corollary}[Expansivity of $\Psi_{y \to x}$ implies $x \not\to y$]
	Suppose dynamical system $(x,y)$ and the measurement functions used to construct its delay embeddings satisfy the conditions of Proposition \ref{thm:causality-stable-takens-1}. If $\exists$ $p, q \in N_y$ such that
	\begin{equation*}
		\frac{\norm{\Psi_{y \to x}(p) - \Psi_{y \to x}(q)}_2^2}{\norm{p-q}_2^2} > \frac{u_{\gamma_x}}{l_{\varphi_y}},
	\end{equation*}
	then $x \not\to y$.
	\label{thm:causality-stable-takens-1-test}
\end{corollary}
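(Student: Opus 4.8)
The plan is to obtain the corollary immediately as the contrapositive of the first bullet of Proposition \ref{thm:causality-stable-takens-1}. That bullet asserts that \emph{if} $x \to y$, then \emph{every} pair $p, q \in N_y$ satisfies $\norm{\Psi_{y \to x}(p) - \Psi_{y \to x}(q)}_2^2 / \norm{p-q}_2^2 \leq u_{\gamma_x}/l_{\varphi_y}$. Logically this has the form $P \Rightarrow \bigl(\forall p,q\colon Q(p,q)\bigr)$; negating the conclusion yields $\bigl(\exists p,q\colon \neg Q(p,q)\bigr) \Rightarrow \neg P$, and since $\neg P$ is exactly ``$x \not\to y$'' and $\neg Q(p,q)$ is precisely the strict inequality assumed in the corollary's hypothesis, the existence of a pair $(p,q)$ whose distortion ratio exceeds $u_{\gamma_x}/l_{\varphi_y}$ forces $x \not\to y$.

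Concretely, I would proceed as follows. First, note that the corollary's hypothesis is stated as ``$(x,y)$ and the measurement functions satisfy the conditions of Proposition \ref{thm:causality-stable-takens-1},'' so the genericity assumptions and the isometry bounds \eqref{eq:stabletakens-Phi-gammax}--\eqref{eq:stabletakens-Phi-phiy} are inherited verbatim and Proposition \ref{thm:causality-stable-takens-1} applies. Then argue by contradiction: suppose $x \to y$ held. Apply the first bullet of Proposition \ref{thm:causality-stable-takens-1} to the particular pair $(p,q)$ furnished by the hypothesis to get $\norm{\Psi_{y \to x}(p) - \Psi_{y \to x}(q)}_2^2 / \norm{p-q}_2^2 \leq u_{\gamma_x}/l_{\varphi_y}$, which directly contradicts the assumed strict inequality. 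Hence $x \not\to y$.

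There is essentially no technical obstacle: all of the work — tracking the isometry constants through the composition $\Psi_{y \to x} = \Phi_{\gamma_x} \circ \pi_x \circ \Phi_{\varphi_y}^{-1}$, which is well defined when $x \to y$ because $M_x$ is then intrinsic and $\Phi_{\varphi_y}$ is invertible by the forced Takens' theorem (Theorem \ref{thm:forced-takens}) — is already carried out in Proposition \ref{thm:causality-stable-takens-1}. The one point I would emphasize in the write-up is the \emph{one-sided} nature of the conclusion: excessive expansivity of $\Psi_{y \to x}$ rules out $x \to y$, but a non-expansive $\Psi_{y \to x}$ does not by itself certify $x \to y$, since $u_{\gamma_x}/l_{\varphi_y}$ is only a necessary bound; this asymmetry is consistent with the paper's broader message that distance-preservation-based tests are directional and depend on system- and measurement-specific constants.
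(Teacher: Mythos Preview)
Your proposal is correct and matches the paper's own proof, which consists of the single line ``Contrapositive of Proposition \ref{thm:causality-stable-takens-1}.'' Your additional remarks about the one-sided nature of the conclusion are accurate and consistent with the surrounding discussion, but are not needed for the proof itself.
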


\begin{proof}
	Contrapositive of Proposition \ref{thm:causality-stable-takens-1}.
\end{proof}

Corollary \ref{thm:causality-stable-takens-1-test} result provides a theoretically-grounded test for causality based on distance preservation. It suggests empirically or theoretically computing the isometry constants $u_{\gamma_x}$ and $l_{\varphi_y}$, then attempting to find a pair of points such that the ratio of their squared distance on $N_y$ and the contemporaneous squared distance on $N_x$ is larger than $u_{\gamma_x} / l_{\varphi_y}$. Such a pair of points, if found, serves as a certificate guaranteeing that $x \not\to y$.

However, Corollary \ref{thm:causality-stable-takens-1-test} provides only a method to guarantee the \emph{nonexistence} of the causal link $x \to y$. This reflects the proper interpretation of many state-space causal inference algorithms such as CCM: for these tests, the existence of a causal relationship $x \to y$ is a necessary, but not sufficient, condition for $\Psi_{y \to x}$ to have certain properties \cite{sugihara2012detecting,krakovska2018comparison}. In most applications, however, it would be helpful to guarantee the \emph{existence} of causal links. We next show that this is possible under two strong assumptions.

\subsubsection{Necessary and sufficient conditions}

The result below strengthens Proposition \ref{thm:causality-stable-takens-1} to show that $x \to y$ is both a necessary and sufficient condition for $\Psi_{y \to x}$ to preserve distances. Showing this requires two assumptions. The first compensates for the looseness of the near-isometries guaranteed by \eqref{eq:stabletakens-Phi-gammax}--\eqref{eq:stabletakens-Phi-phiy}:
\begin{assumption}[Expansivity of $\iota_x$ when $y \to x$]
	\label{ass:inclusion-map-expansivity}
	If $y \to x$, then $\iota_x \colon M_y \to M_{xy}$ satisfies
	\begin{equation*}
		\frac{\norm{\iota_x(p) - \iota_x(q)}_2^2}{\norm{p-q}_2^2} > \frac{u_{\gamma_x} u_{\gamma_y}}{l_{\varphi_x} l_{\varphi_y}}
	\end{equation*}
	for \emph{some} $p,q \in M_y$.
\end{assumption}

Assumption \ref{ass:inclusion-map-expansivity} states that when $y \to x$ there must be a pair of points on $M_{xy}$ that become sufficiently closer together when projected onto $M_y$. To gain intuition, take any two points $\{y_t,y_s\}$ on $M_y$ and consider their contemporaneous points on $M_x$, $\{x_t,x_s\}$. Assumption \ref{ass:inclusion-map-expansivity} states that for $t \neq s$,
\begin{equation*}
	\norm{x_t-x_s}_2^2 > \left( \frac{u_{\gamma_x} u_{\gamma_y}}{l_{\varphi_x} l_{\varphi_y}} - 1 \right) \norm{y_t-y_s}_2^2,
\end{equation*}
that is, that the distance between $x_t$ and $x_s$ is sufficiently large with respect to the distance between $y_t$ and $y_s$. 

The second assumption eliminates the possibility that $\Psi_{y \to x}$ is stable ``by coincidence'':
\begin{assumption}[Causal sufficiency]
	\label{ass:causal-sufficiency}
	If $\exists$ $p,q \in N_y$ such that
	\begin{equation*}
		\frac{\norm{\Psi_{y \to x}(p)-\Psi_{y \to x}(q)}_2^2}{\norm{p-q}_2^2} \leq \frac{u_{\gamma_x}}{l_{\varphi_y}},
	\end{equation*}
	then there is a causal connection between the dynamical systems, either $x \to y$ or $y \to x$. In other words, if $\Psi_{y \to x}$ is a stable mapping we cannot have $x \perp y$.
\end{assumption}

Assumption \ref{ass:causal-sufficiency} is a type of causal sufficiency assumption; it eliminates the possibility that the systems $x$ and $y$ are not directly related but share structure by coincidence or because of an external system. For instance, Assumption \ref{ass:causal-sufficiency} precludes the existence of an external confounder $z$ that affects both $x$ and $y$, or a setting in which $x$ and $y$ are two independent yet identical systems started at the same time. The strength of this assumption is the reason that many techniques that rely on observational data do not claim to be able to detect ``causal'' interactions.

Assumptions \ref{ass:inclusion-map-expansivity} and \ref{ass:causal-sufficiency} allow us to strengthen Proposition \ref{thm:causality-stable-takens-1}:

\begin{proposition}[Necessary and sufficient conditions for detecting causal interaction using the stable Takens' theorem]
	\label{thm:causality-stable-takens-iff}
	Suppose that $(x,y)$ and its measurement functions satisfy Assumption \ref{ass:inclusion-map-expansivity}, Assumption \ref{ass:causal-sufficiency}, and the conditions of Proposition \ref{thm:causality-stable-takens-1}. Then $x \to y$ if and only if
	\begin{equation*}
		\frac{\norm{\Psi_{y \to x}(p) - \Psi_{y \to x}(q)}_2^2}{\norm{p-q}_2^2} \leq \frac{u_{\gamma_x}}{l_{\varphi_y}}
	\end{equation*}
	for all $p,q \in N_y$.
\end{proposition}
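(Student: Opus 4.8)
The plan is to prove the two implications separately, reusing Proposition~\ref{thm:causality-stable-takens-1} for one direction and the two new assumptions for the other. The forward direction is immediate: if $x \to y$, then the first bullet of Proposition~\ref{thm:causality-stable-takens-1} already gives $\norm{\Psi_{y\to x}(p)-\Psi_{y\to x}(q)}_2^2/\norm{p-q}_2^2 \le u_{\gamma_x}/l_{\varphi_y}$ for all $p,q \in N_y$, and the hypotheses needed there are a subset of those assumed here.

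For the converse I would argue by contradiction. Assume the displayed ratio is at most $u_{\gamma_x}/l_{\varphi_y}$ for every $p,q \in N_y$ --- i.e.\ $\Psi_{y\to x}$ is a stable map in the sense of that bound --- and suppose, toward a contradiction, that $x \not\to y$. Assumption~\ref{ass:causal-sufficiency} (causal sufficiency) immediately rules out $x \perp y$, since a stable $\Psi_{y\to x}$ forces a causal link; so the only remaining possibility is that $y$ drives $x$ while $x$ does not drive $y$. In that regime the autonomous $y$-attractor $M_y$ and the joint attractor $M_{xy}$ are well-defined, and by the genericity hypotheses carried over from Proposition~\ref{thm:causality-stable-takens-1} the delay maps $\Phi_{\gamma_y}\colon M_y \to N_y$ and $\Phi_{\varphi_x}\colon M_{xy}\to N_x$ are embeddings; hence $\Psi_{y\to x}$ factors through the attractors as $\Psi_{y\to x} = \Phi_{\varphi_x}\circ \iota_x \circ \Phi_{\gamma_y}^{-1}$, with $\iota_x\colon M_y \to M_{xy}$ the inclusion $y_t \mapsto (x_t,y_t)$.

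Next I would exhibit a pair of points violating the bound. Assumption~\ref{ass:inclusion-map-expansivity} supplies $a,b \in M_y$ with $\norm{\iota_x(a)-\iota_x(b)}_2^2 / \norm{a-b}_2^2 > u_{\gamma_x}u_{\gamma_y}/(l_{\varphi_x}l_{\varphi_y})$. Taking $p = \Phi_{\gamma_y}(a)$ and $q = \Phi_{\gamma_y}(b)$, lower-bounding the numerator of the distance ratio via the lower isometry constant $l_{\varphi_x}$ of $\Phi_{\varphi_x}$ in~\eqref{eq:stabletakens-Phi-phix} and upper-bounding the denominator via the upper isometry constant $u_{\gamma_y}$ of $\Phi_{\gamma_y}$ in~\eqref{eq:stabletakens-Phi-gammay} yields
\begin{equation*}
	\frac{\norm{\Psi_{y\to x}(p)-\Psi_{y\to x}(q)}_2^2}{\norm{p-q}_2^2} \;\ge\; \frac{l_{\varphi_x}}{u_{\gamma_y}}\cdot\frac{\norm{\iota_x(a)-\iota_x(b)}_2^2}{\norm{a-b}_2^2} \;>\; \frac{l_{\varphi_x}}{u_{\gamma_y}}\cdot\frac{u_{\gamma_x}u_{\gamma_y}}{l_{\varphi_x}l_{\varphi_y}} \;=\; \frac{u_{\gamma_x}}{l_{\varphi_y}},
\end{equation*}
where the strict step uses that the isometry constants $l_{\varphi_x}$ and $u_{\gamma_y}$ are positive and finite (which is forced by Assumption~\ref{ass:inclusion-map-expansivity} being satisfiable). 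This contradicts the standing assumption that the ratio never exceeds $u_{\gamma_x}/l_{\varphi_y}$, so $x\not\to y$ is impossible and therefore $x\to y$.

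The isometry-constant bookkeeping in the last step is routine and parallels the proof of Proposition~\ref{thm:causality-stable-takens-1}. The step I expect to require the most care --- and the main obstacle --- is the structural part of the converse: correctly enumerating the causal cases left open after Assumption~\ref{ass:causal-sufficiency} eliminates $x\perp y$, and writing $\Psi_{y\to x}$ in the right factored form when $y\to x$, which hinges on $M_y$, $M_{xy}$, $\Phi_{\gamma_y}$, and $\Phi_{\varphi_x}$ being well-defined and bijective in that regime and on applying Assumption~\ref{ass:inclusion-map-expansivity} to the inclusion $\iota_x$ (not $\iota_y$) at the level of attractors before pushing forward through $\Phi_{\gamma_y}$.
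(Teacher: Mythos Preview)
Your proposal is correct and follows essentially the same approach as the paper's proof: the forward direction is the first bullet of Proposition~\ref{thm:causality-stable-takens-1}, and the converse proceeds by contradiction via Assumption~\ref{ass:causal-sufficiency} to force $y\to x$, then factors $\Psi_{y\to x}=\Phi_{\varphi_x}\circ\iota_x\circ\Phi_{\gamma_y}^{-1}$ and combines the isometry constants $l_{\varphi_x}$, $u_{\gamma_y}$ with Assumption~\ref{ass:inclusion-map-expansivity} to produce a pair violating the bound. The only cosmetic difference is that the paper runs the contradiction in the other direction (from the assumed bound it deduces $\norm{\iota_x(p)-\iota_x(q)}_2^2/\norm{p-q}_2^2 \le u_{\gamma_x}u_{\gamma_y}/(l_{\varphi_x}l_{\varphi_y})$ for all $p,q\in M_y$, contradicting Assumption~\ref{ass:inclusion-map-expansivity}), which is the same inequality chain read backwards.
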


The proof, which applies the isometry constants \eqref{eq:stabletakens-Phi-gammax}--\eqref{eq:stabletakens-Phi-phiy} to the compositions of mappings that can comprise $\Psi_{y \to x}$ using the additional restrictions of Assumptions \ref{ass:inclusion-map-expansivity} and \ref{ass:causal-sufficiency}, is located in Appendix \ref{sup:proofs-causality-stable-takens-iff}. This test suggests empirically or theoretically computing the isometry constants $u_{\gamma_x}$ and $l_{\varphi_y}$, then attempting to find a pair of points such that the ratio of their squared distance on $N_y$ and the contemporaneous squared distance on $N_x$ is larger than $u_{\gamma_x} / l_{\varphi_y}$. If Assumptions \ref{ass:inclusion-map-expansivity} and \ref{ass:causal-sufficiency} are satisfied, then Proposition \ref{thm:causality-stable-takens-iff} allows a purely distance-based test to be used to guarantee either the existence or nonexistence of $x \to y$.

\subsection{Discussion of implications}
\label{sec:theory-discussion}

The results in this section suggest the following for algorithm developers and practitioners:
\begin{itemize}
    \item By showing that causal structure impacts \emph{whether} $\Phi_{\varphi_y}$ preserves inter-point distances, Proposition \ref{thm:linear-forced-stable-takens} provides some theoretical evidence in support of using the closeness principle to detect causal structure. However, this result depends on the isometry constants \eqref{eq:stabletakens-Phi-gammax}--\eqref{eq:stabletakens-Phi-phiy}, which in turn depend on a multitude of other system properties (as illustrated by the stable Takens' theorem (Theorem \ref{thm:linear-stable-takens})). This supports previous empirical observations \cite{cobey2016limits,krakovska2019correlation} that \emph{how well} distances are preserved depends on properties of the system under study (see Section \ref{sec:discussion} for details). In addition to the system properties highlighted in previous work (see Section \ref{sec:discussion}), our results suggest that ``alignment'' \footnote{More precisely, by ``alignment'' in the linear stable Takens' theorem of \cite{yap2011stable} we mean the minimum and maximum values of $\{ \kappa_i = \frac{1}{\norm{h}_2} \abs{v_i^H h} \}$ (see Definition \ref{def:class-Ad} and Theorem \ref{thm:linear-stable-takens}). In the general stable Takens' theorem of \cite{eftekhari2018stabilizing} we mean the ``stable rank of the attractor'' defined in \cite[Sec.~3]{eftekhari2018stabilizing}, which captures alignment of the measurement function and system. Intuitively, this means that inter-point distances can be guaranteed to be better preserved when the measurement functions more fully capture information about the system (see \cite[Remark 3.2]{eftekhari2018stabilizing}).} of a system and the measurement functions used to construct delay embeddings could impact the results of closeness-principle-based algorithms. This indicates an important sensitivity of these methods to how data is collected.
    \item Proposition \ref{thm:linear-forced-stable-takens} deals with a restricted class of linear systems and measurement functions. While analogous result with the less restrictive stable Takens' theorem of \cite{eftekhari2018stabilizing} are beyond the scope of this paper, our initial work suggests that an additional assumption on the injectivity of the map $\pi_x$ in the nonlinear case. This matches the message that algorithms based on the closeness principle are sensitive to a variety of system-dependent properties that may be independent of causal structure, and supports the role of injectivity in methods such as \cite{cummins2015efficacy,harnack2017topological}.
    \item The necessity of a causal sufficiency assumption for Proposition \ref{thm:causality-stable-takens-iff} indicates that practitioners using methods based on the closeness principle should carefully assess whether results may be due to common counfounding, which may be indistinguishable from causal interaction to this class of algorithms.\footnote{Methods based on interventions (see, e.g., \cite{peters2020causal}) are able to draw conclusions about causal structure without this strong assumption, but unlike state-space methods, require the ability to perform experiments.}
\end{itemize}

\section{Typical systems don't satisfy the closeness principle}
\label{sec:experiments}

The results in Section \ref{sec:theory} provide insight into how causal inference techniques that are motivated by versions of the closeness principle might be justified. In practice, however, we find that many common coupled systems do not satisfy the closeness principle --- providing further evidence that the algorithms reviewed in Section \ref{sec:background-causal-inference} are sensitive to properties of the system (and its relationship to the measurement function) other than causal structure. In this section we empirically explore the preservation of geometric structure in common coupled systems and discuss implications for causal inference techniques based on stability. This stability perspective is instructive for understanding when and why closeness-principle-based causal inference methods may be well-justified.

We explore three common coupled nonlinear systems. In each, $x$ and $y$ are multidimensional, and $x$ drives $y$ with coupling strength $C$.
\begin{enumerate}
	\item The identical unidirectionally-coupled H\'enon maps,
	\begin{equation*}
		\begin{cases}
			x_1[t+1] = 1.4 - x_1^2[t] + 0.3 x_2[t] \\
			x_2[t+1] = x_1[t] \\
			y_1[t+1] = 1.4 - \left( C x_1[t] y_1[t] + (1-C) y_1^2[t] \right) + 0.3 y_2[t] \\
			y_2[t+1] = y_1[t],
		\end{cases}
	\end{equation*}
	with initial condition $(x_0,y_0) = [0.7, 0.0, 0.91, 0.7]^T$. To construct delay embeddings, we use the measurements
	\begin{equation*}
		\gamma_x = (x_1,x_2)[t] \mapsto x_1[t] ~\text{and}~ \gamma_y = (y_1,y_2)[t] \mapsto y_1[t]
	\end{equation*}
	so that
	\begin{equation*}
		\varphi_x = (x_1,x_2,y_1,y_2)[t] \mapsto x_1[t] ~\text{and}~ \varphi_y = (x_1,x_2,y_1,y_2) \mapsto y_1[t].
	\end{equation*}
	Identical synchronization (i.e., $x(t) = y(t)$) is induced just before $C = 0.7$ \cite{schiff1996detecting}. This system is used in, e.g., \cite{schiff1996detecting,quianquiroga2000learning,palus2007directionality,romano2007estimation,faes2008mutual,krakovska2018comparison}.
	\item A R\"ossler system coupled to a Lorenz system,
	\begin{equation*}
		\begin{cases}
			\dot{x}_1(t) = -6 \left( x_2(t) + x_3(t) \right) \\
			\dot{x}_2(t) = 6 \left( x_1(t) + 0.2 x_2(t) \right) \\
			\dot{x}_3(t) = 6 \left[ 0.2 + x_3(t) \left(x_1(t) - 5.7\right) \right] \\
			\dot{y}_1(t) = 10 \left( -y_1(t) + y_2(t) \right) \\
			\dot{y}_2(t) = 28 y_1(t) - y_2(t) - y_1(t) y_3(t) + C x_2^2(t) \\
			\dot{y}_3(t) = y_1(t) y_2(t) - \tfrac83 y_3(t),
		\end{cases}
	\end{equation*}
	with initial condition $(x_0,y_0) = [0.0, 0.0, 0.4, 0.3, 0.3, 0.3]^T$. The system trajectory was integrated with MATLAB's \texttt{ode45} and a sampling period of $dt=0.025$ was used to generate simulation data. To construct delay embeddings, we use the measurements
	\begin{equation*}
		\gamma_x = (x_1,x_2,x_3)(t) \mapsto x_2(t) ~\text{and}~ \gamma_y = (y_1,y_2,y_3)(t) \mapsto y_2(t)
	\end{equation*}
	so that
	\begin{equation*}
		\varphi_x = (x_1,x_2,x_3,y_1,y_2,y_3)(t) \mapsto x_2(t) ~\text{and}~ \varphi_y = (x_1,x_2,x_3,y_1,y_2,y_3)(t) \mapsto y_2(t).
	\end{equation*}
	Synchronization is induced just before $C = 3$ \cite{krakovska2018comparison}. This system is used in, e.g., \cite{levanquyen1999nonlinear,quianquiroga2000learning,andrzejak2003bivariate,palus2007directionality,krakovska2016testing,krakovska2018comparison,bahamonde2023usefulness}.
	\item The nonidentical unidirectionally coupled R\"ossler systems,
	\begin{equation*}
		\begin{cases}
			\dot{x}_1(t) = -\omega_1 x_2(t) - x_3(t) \\
			\dot{x}_2(t) = \omega_1 x_1(t) + 0.15 x_2(t) \\
			\dot{x}_3(t) = 0.2 + x_3(t) \left( x_1(t) - 10 \right) \\
			\dot{y}_1(t) = -\omega_2 y_2(t) - y_3(t) + C \left( x_1(t) - y_1(t) \right) \\
			\dot{y}_2(t) = \omega_2 y_1(t) + 0.15 y_2(t) \\
			\dot{y}_3(t) = 0.2 + y_3(t) \left( y_1(t) - 10 \right),
		\end{cases}
	\end{equation*}
	with $\omega_1 = 1.015$, $\omega_2 = 0.985$, and initial condition $(x_0,y_0) = [0.0, 0.0, 0.4, 0.0, 0.0, 0.4]^T$. To construct delay embeddings, we use the measurements
	\begin{equation*}
		\gamma_x = (x_1,x_2,x_3)(t) \mapsto x_3(t) ~\text{and}~ \gamma_y = (y_1,y_2,y_3)(t) \mapsto y_3(t)
	\end{equation*}
	so that
	\begin{equation*}
		\varphi_x = (x_1,x_2,x_3,y_1,y_2,y_3)(t) \mapsto x_3(t) ~\text{and}~ \varphi_y = (x_1,x_2,x_3,y_1,y_2,y_3)(t) \mapsto y_3(t).
	\end{equation*} Generalized synchronization \cite{rulkov1995generalized} is induced at approximately $C = 0.12$ \cite{krakovska2020implementation}. This system is used in, e.g., \cite{smirnov2005detection,palus2007directionality,krakovska2020implementation}; slight variants of this system are used in, e.g., \cite{rulkov1995generalized,hirata2010identifying,tajima2015untangling,hirata2016detecting,krakovska2018comparison,bahamonde2023usefulness}.
\end{enumerate}
We remove transients from the time series of each system by discarding the first 1000 generated samples.

\subsection{Characterizing the stability of common systems}
\label{sec:experiments-stability}

\begin{figure*}
	\centering
	\includegraphics[width=\textwidth]{\analysispath/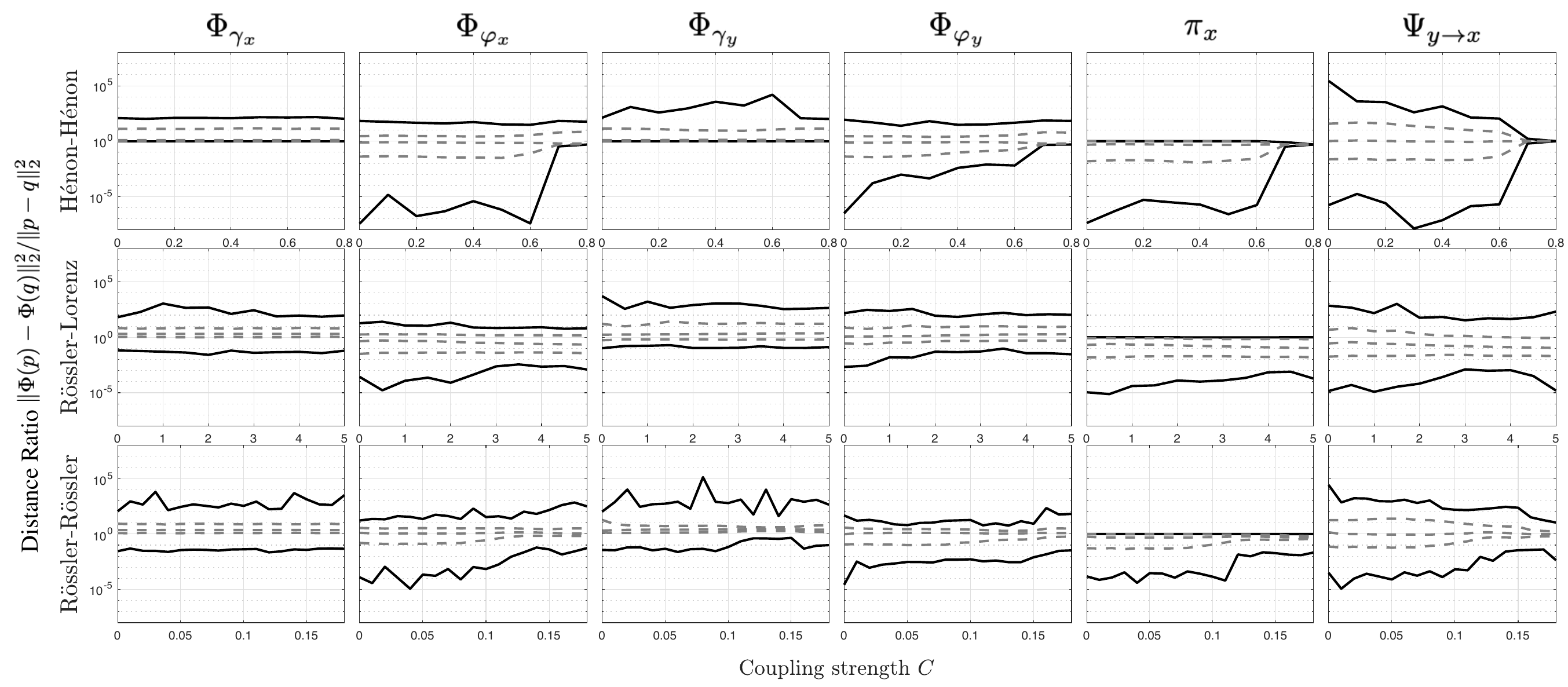}
	\caption{Distance ratios $\norm{\Phi(p)-\Phi(q)}_2^2 / \norm{p-q}_2^2$ of various maps $\Phi$ as coupling strength $C$ increases. Solid dark traces denote upper and lower isometry constants; lighter dashed traces denote 5\textsuperscript{th}, 50\textsuperscript{th}, and 95\textsuperscript{th} percentiles of distance ratios. Each statistic is computed from 5,000 sampled point pairs.}
	\label{fig:stability}
\end{figure*}

\begin{figure*}
	\centering
	\includegraphics[width=\textwidth]{\analysispath/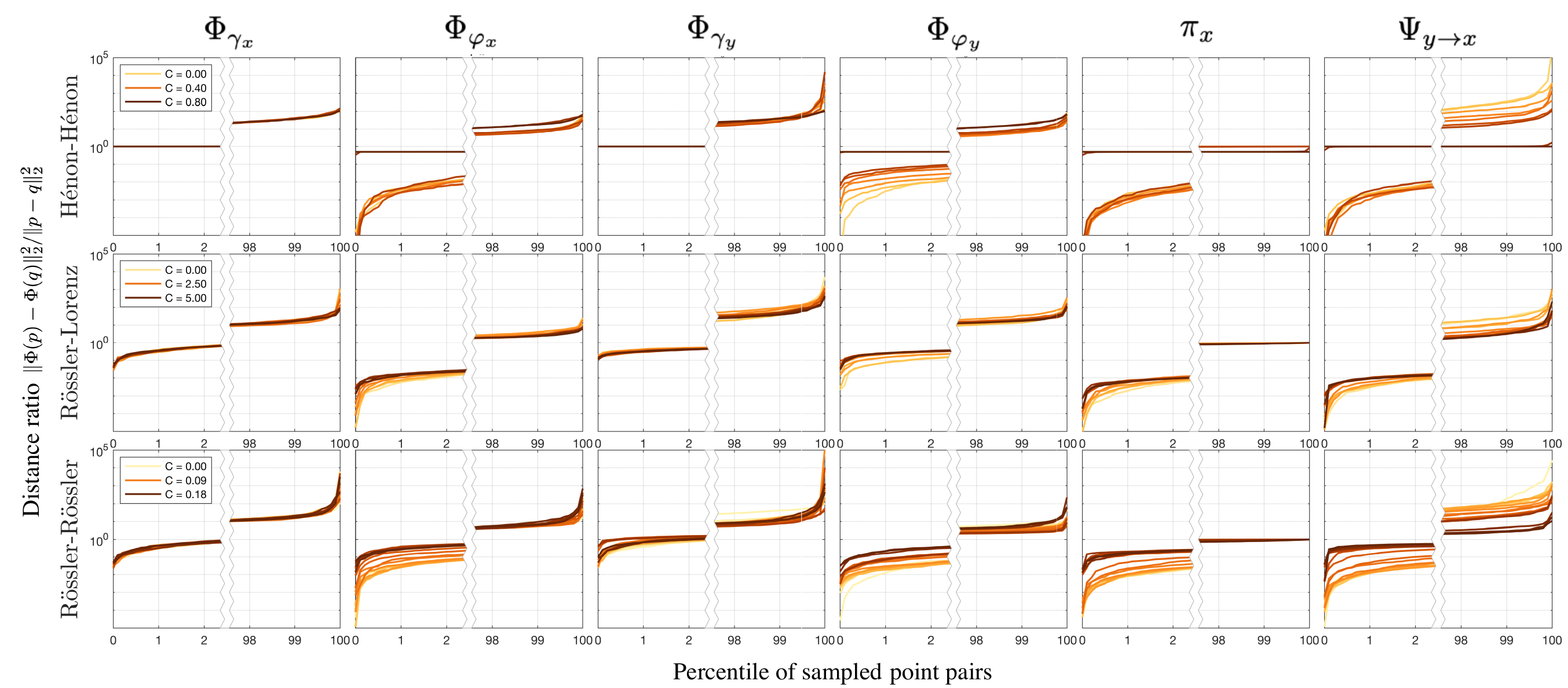}
	\caption{Percentiles of distance ratios $\norm{\Phi(p)-\Phi(q)}_2^2 / \norm{p-q}_2^2$ for various maps $\Phi$ as coupling strength $C$ increases. Darker colored traces represent stronger coupling. Tightly clustered traces indicate that map stability is less sensitive to $C$, while more broadly spread traces indicate that map stability is more sensitive to $C$.}
	\label{fig:stability-percentile}
\end{figure*}

We first empirically compute lower and upper isometry constants (i.e., the minimum and maximum amounts by which maps expand or contract distances between point pairs) for the relevant maps in Figure \ref{fig:mappings-all} for each of the nonlinear coupled systems described above. These isometry constants, shown in Figure \ref{fig:stability}, are each computed by randomly sampling 5,000 point pairs $(p,q)$ from the domain of each map. The empirical isometry constants (minimum and maximum distance ratios) represented by dark traces in Figure \ref{fig:stability} are used in the theory in Section \ref{sec:theory}.

Many closeness-principle-based methods operate not on isometry constants themselves, but on other statistics describing how certain maps tend to preserve distances between point pairs. To provide a more fulsome understanding of the distribution of the distance ratios of each mapping, the dashed lines in Figure \ref{fig:stability} show the 5\textsuperscript{th}, 50\textsuperscript{th}, and 95\textsuperscript{th} percentiles of the distance ratio computed for the 5,000 simulated point pairs. We also show, in Figure \ref{fig:stability-percentile}, how very small and large percentiles of these distributions of distance ratios are affected by coupling strength. Statistics such as these may be helpful in practice to reduce the sensitivity of purely distance-based tests to noise. Isometry increasing with coupling strength $C$ is indicated in Figure \ref{fig:stability-percentile} by traces moving closer to unity as $C$ as they become darker.

As expected because the $x$ subsystem is independent of the coupling strength $C$, the distance ratio of $\Phi_{\gamma_x} \colon M_x \to N_x$ remains constant (up to ``error'' stemming from the Monte-Carlo simulation used to compute distance ratios) as $C$ increases. The pattern of distance preservation not changing with coupling strength $C$ is reflected in Figure \ref{fig:stability} by roughly horizontal lines (indicating that distance preservation is static as $C$ increases along the horizontal axis), and in Figure \ref{fig:stability-percentile} by roughly overlapping lines (indicating that distance preservation is static as  $C$ increases with darker traces). Meanwhile, the stability of $\Phi_{\varphi_x}$, $\Phi_{\gamma_y}$, and $\Phi_{\varphi_y}$, each of which encode traits of the complete system, does in fact change as $C$ increases.

The map $\Phi_{\varphi_x} \colon M_{xy} \to N_x$ maps the entire coupled system $(x,y)$ to the delay embedding of only the \emph{driving} subsystem. As coupling strength $C$ increases (along the horizontal axis in Figure \ref{fig:stability} and as traces become darker in Figure \ref{fig:stability-percentile}), properties of the response system could change in ways that are not reflected in $N_x$. We thus expect $\Phi_{\varphi_x}$ to become less stable as $C$ begins to increase (allowing $\varphi_x$ to obtain less information about the complete $(x,y)$ system), and then to become more stable as $C$ increases further and synchronization is induced (allowing the state of the driving subsystem that $\varphi_x$ measures to again reveal the state of the entire coupled system). The magnitude of this effect, however, depends on an amalgamation of system- and measurement-specific traits. In the systems considered here, the second part of the pattern is indeed borne out in the H\'enon-H\'enon system, in which \emph{identical} synchronization occurs just before $C = 0.7$. In the remaining two systems, however, we see a general trend of \emph{increasing} stability as $C$ increases --- even before the onset of synchronization. In the cases where $\Phi_{\varphi_x}$ is unstable, it is usually because of an unfavorable \emph{lower} isometry constant, indicating that $\Phi_{\varphi_x}$ collapses points that are far on $M_{xy}$ to similar states of $N_x$. This pattern is particularly evident in Figure \ref{fig:stability-percentile}.

The map $\Phi_{\gamma_y} \colon M_y \to N_y$ transforms the response subsystem $y$ alone to its delay embedding. The effect of coupling strength $C$ on the stability of this map is to change the properties of the response subsystem. If this increased coupling strength results in a more complex, higher dimensional, or higher activity attractor for the response subsystem, we might expect the stability of the map $\Phi_{\gamma_y}$ to worsen. However, in Figures \ref{fig:stability} and \ref{fig:stability-percentile} we see limited effects of $C$ on stability, suggesting more optimistically that properties of the $y$ subsystem that are dependent on $C$ do not have significant impact on the stability of the delay embedding operator.

The map $\Phi_{\varphi_y} \colon M_{xy} \to N_y$ transforms the entire coupled system to the delay embedding of only the response subsystem. In this sense, $\Phi_{\varphi_y}$ is of particular interest in understanding the closeness principle because it quantifies --- using only information about stability --- the core principle that as coupling strength $C$ increases from zero the delay embedding of the \emph{response subsystem alone} ``contains information about'' the complete system (see Section \ref{sec:theory}). In the systems considered here, however, empirical evidence of this effect is modest. In cases where we see increasing isometry as $C$ increases (reflected in Figure \ref{fig:stability} by an increased convergence of traces towards unity and in Figure \ref{fig:stability-percentile} by traces converging toward unity as they become darker), it is not meaningfully different from those of $\Phi_{\varphi_x}$, suggesting that increasing (undirected) synchronization may be more at play than asymmetric coupling. In Figure \ref{fig:stability-percentile}, we see that the impact of $C$ is much different between the three systems. This suggests, yet again, that the strength of the ``closeness principle'' effect is impacted by system- and measurement-specific properties beyond strength of causal interaction alone.

The projection $\pi_x \colon M_{xy} \to M_x$ provides insight into the relative magnitudes of the complete system and driving subsystem. As expected, the stability of this map in the H\'enon-H\'enon system rapidly increases as synchronization occurs, with distance ratios converging to 0.5 as the two subsystems become identical. In the R\"ossler-R\"ossler system, $\pi_x$ becomes increasingly isometric as $C$ increases (particularly when comparing percentiles of distance ratios), suggesting stronger synchronization. As expected, the highest percentiles of distance ratios are all close to 1 (reflecting states in $(x,y)$ where $y$ has large magnitude while $x$ has low magnitude). However, the variance observed in lower percentiles of distance preservation suggest caution for the applicability of our theory in light of Assumption \ref{ass:inclusion-map-expansivity}.

Most relevant to analyzing closeness-principle-based methods is $\Psi_{y \to x} \colon N_y \to N_x$, which empirically describes the relationship that Conjecture \ref{conj:closeness-principle} is concerned with. Conjecture \ref{conj:closeness-principle} postulates that when $C = 0$ (i.e, $x \perp y$) $\Psi_{y \to x}$ will not be stable, while when $C > 0$ (i.e., $x \to y$) $\Psi_{y \to x}$ will be more stable. Figures \ref{fig:stability}--\ref{fig:stability-percentile} indeed reflect some of this pattern: as $C$ increases, $\Psi_{y \to x}$ becomes more stable. However, we do not consistently observe $\Psi_{y \to x}$ becoming \emph{less expansive} as $C$ increases, the key pattern shown in our carefully controlled linear test system of Section \ref{sec:theory-linear-phiy-example}. As we show next, this limitation is inherited by existing distance-preservation-based heuristic methods for detecting the direction of causal influence.

\subsection{Heuristics used by other methods}
\label{sec:experiments-heuristics}

\begin{figure*}
	\centering
	\includegraphics[width=\textwidth]{\analysispath/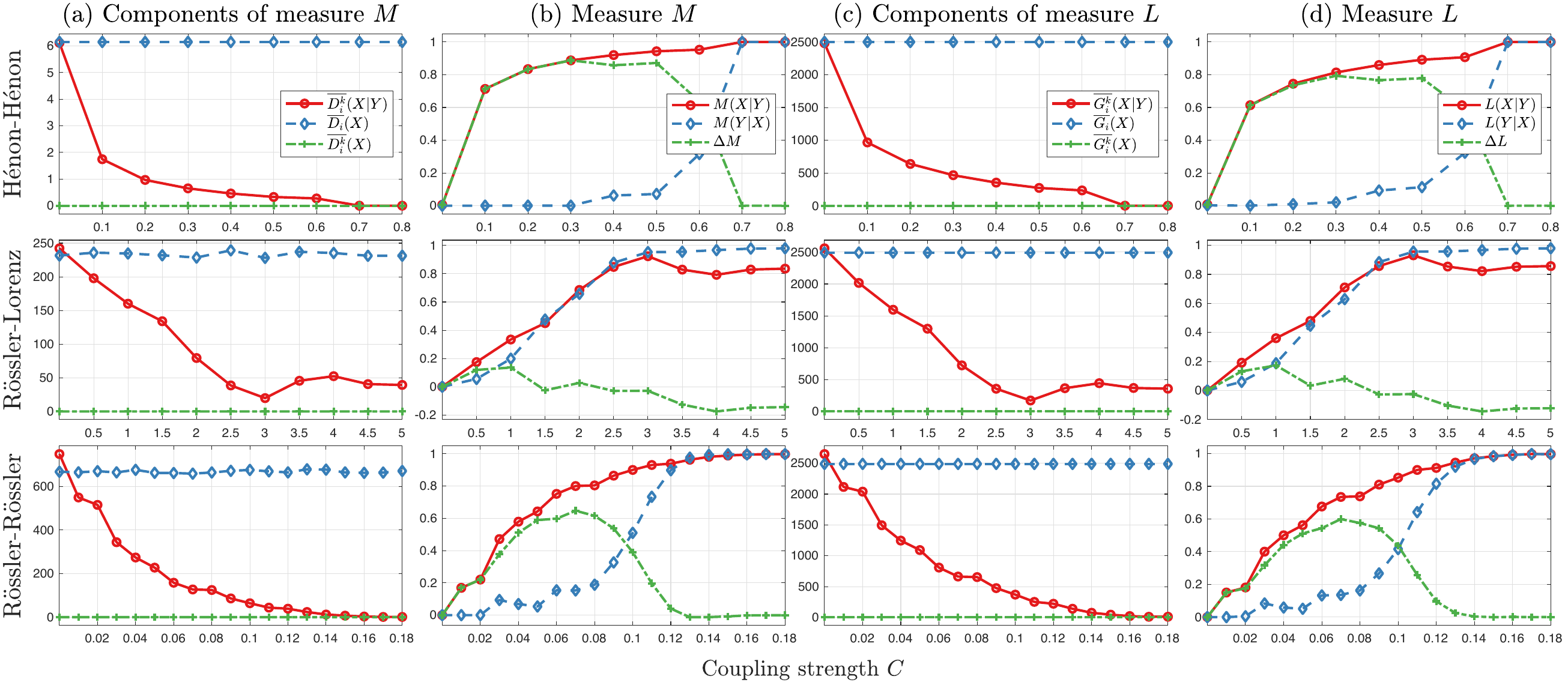}
	\caption{Other heuristics based on the distance preservation of $\Psi_{y \to x}$ for the coupled nonlinear systems defined above as the system coupling strength $C$ increases. (a--b) directional influence metrics \eqref{eq:metric-andrzejak2003bivariate} of \cite{andrzejak2003bivariate} and means of constituent terms; (c--d) rank-based directional influence metric \eqref{eq:metric-chicharro2009reliable} of \cite{chicharro2009reliable} and means of constituent terms.}
	\label{fig:heuristics}
\end{figure*}

\begin{figure*}
	\centering
	\includegraphics[width=\textwidth]{\analysispath/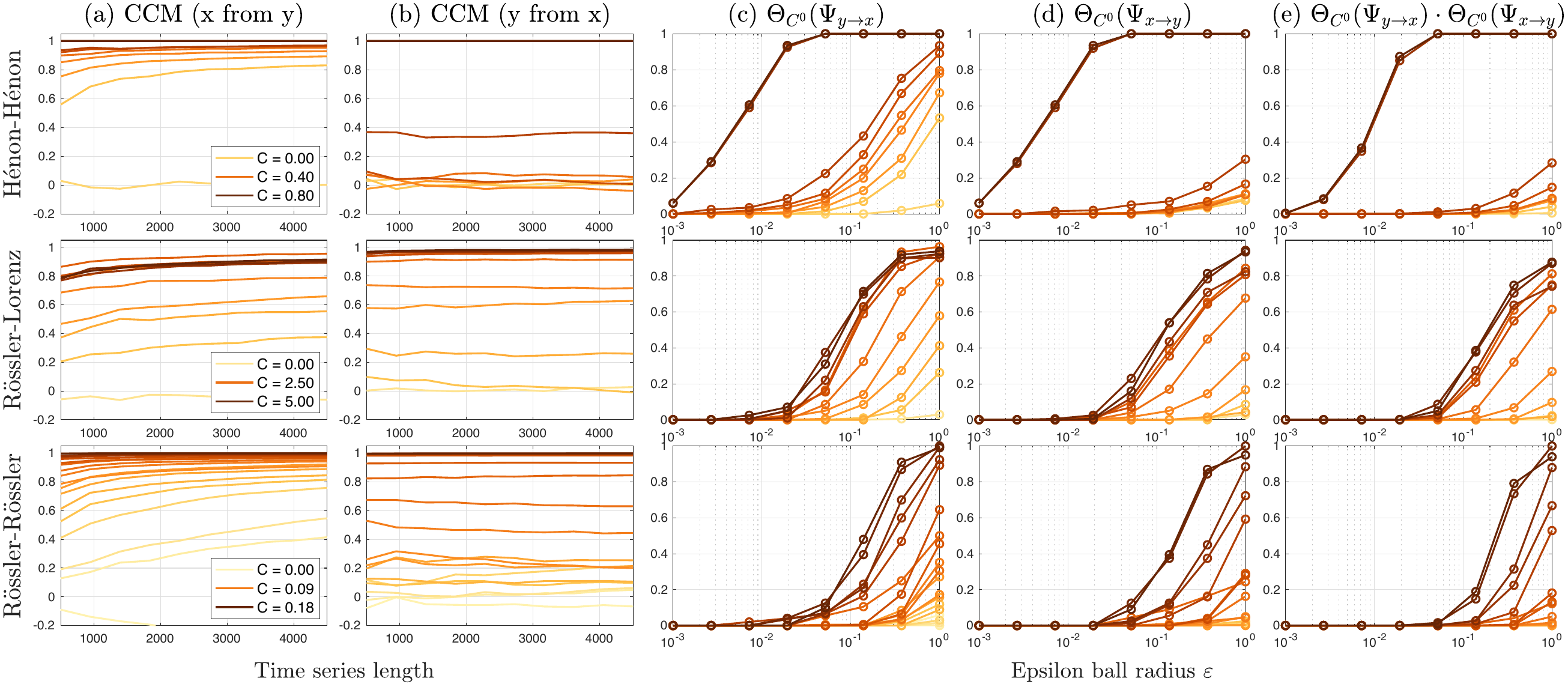}
	\caption{Other heuristics based on continuity-like properties of $\Psi_{y \to x}$ for the coupled nonlinear systems defined above as the system coupling strength $C$ increases. (a--b) convergent cross-mapping heuristic of \cite{sugihara2012detecting}; (c--e) continuity heuristic of \cite{pecora1995statistics} for (c) continuity, (d) inverse continuity, and (e) injectivity.}
	\label{fig:heuristics2}
\end{figure*}

Although Section \ref{sec:experiments-stability} found limited evidence for the existence of a purely-distance-based form of Conjecture \ref{conj:closeness-principle}, many techniques have obtained successful --- if sometimes inconsistent --- results using this principle. In this section we implement a sample of these heuristics and interpret their results in the context of our work.

Shown in Figure \ref{fig:heuristics} are results from the relative-distance-based metrics of \cite{andrzejak2003bivariate} and \cite{chicharro2009reliable} described in Section \ref{sec:background-causal-inference}. Both techniques, roughly speaking, compare how well nearest neighbors on $N_y$ correspond to nearest neighbors on $N_x$ [see \eqref{eq:metric-andrzejak2003bivariate} and \eqref{eq:metric-chicharro2009reliable}]; \cite{andrzejak2003bivariate} directly uses distances while \cite{chicharro2009reliable} uses ranked distances.

We see that both techniques are effective in detecting coupling between $x$ and $y$: as $C$ increases, the average distance to $k$ mutual neighbors ($\overline{D_i^k}(X \mid Y)$) and $\overline{G_i^k}(X \mid Y)$ smoothly decreases from the mean inter-point distance on $N_x$ ($\overline{D_i}(X)$ and $\overline{G_i}(X)$) to the mean distance to each point on $N_x$'s $k$ nearest neighbors ($\overline{D_i^k}(X)$ and $\overline{G_i^k}(X)$). For both the H\'enon-H\'enon and R\"ossler-R\"ossler systems, these methods effectively detect coupling direction: $M(X \mid Y)$ and $L(X \mid Y)$ rise quickly with $C$, while $M(Y \mid X)$ and $L(Y \mid X)$ --- the metrics used to quantify coupling from $y \to x$ --- begin to increase only at the onset of synchronization. For the R\"ossler-Lorenz system, however, the metrics quantifying $x \to y$ and $y \to x$ rise simultaneously, and it is not possible to detect coupling direction. Previous work \cite{arnhold1999robust,quianquiroga2000learning,chicharro2009reliable} has showed how these measures depend on system-specific properties like attractor dimension, so it is unsurprising that we see large differences in efficacy between systems.

As discussed in the introduction, Takens' theorem guarantees only that $\Psi_{y \to x}$ is continuous. We next consider two heuristic methods that are more closely tied to the \emph{continuity} of $\Psi_{y \to x}$ than its stability.

The convergent cross-mapping (CCM) algorithm of \cite{sugihara2012detecting} (see Section \ref{sec:background-causal-inference}) relies on the principle that when $x \to y$, $N_y$ will become increasingly predictive of the measurements obtained from $x$ as time series length increases. As shown in Figure \ref{fig:heuristics2}(a--b), we find that this method is generally effective in the three systems we consider. With the exception of the R\"ossler-R\"ossler example, $N_y$ is unable to reconstruct $x$ when $C = 0$, and cross-map efficacy increases monotonically with $C$. Meanwhile, as shown in Figure \ref{fig:heuristics2}(b), the ability of $x$ to reconstruct $N_y$ does not increase with time series length, and high reconstruction accuracy is achieved only at values of $C$ close to the onset of synchronization.

An alternate heuristic for continuity developed by \cite{pecora1995statistics} was proposed as part of a rigorous theoretical framework detecting asymmetric coupling relationships by \cite{cummins2015efficacy} (see Section \ref{sec:introduction}). This heuristic finds sets on $N_x$ and $N_y$ that satisfy the epsilon-delta definition of continuity, then assesses the probability that these points were found by chance using a specific null hypothesis \cite{pecora1995statistics}. Evidence for continuity, denoted by $\Theta_{C^0}(\Psi_{y \to x})$, is provided by this probability increasing to near-unity as $\varepsilon$ increases. Evidence for continuity of the inverse map $\Psi_{x \to y}$ is computed similarly. Because the continuity of a map implies that it is one-to-one, evidence that a map is homeomorphic is provided by the combination of both continuity and inverse continuity. Thus, the product $\Theta_{C^0}(\Psi_{y \to x}) \cdot \Theta_{C^0}(\Psi_{x \to y})$ increasing to unity with $\varepsilon$ provides evidence that the map is homeomorphic. As shown in Figure \ref{fig:heuristics2}(c--e), we find evidence in all three systems of increasing continuity as $C$ increases. In the H\'enon-H\'enon map, $\Theta_{C^0}(\Psi_{y \to x})$ increases with $C$ much quicker than $\Theta_{C^0}(\Psi_{x \to y})$ does (until $C$ becomes large enough for identical synchronization), providing evidence for the relationship $x \to y$. In the R\"ossler-Lorenz and R\"ossler-R\"ossler systems we similarly see $\Theta_{C^0}(\Psi_{y \to x})$ and $\Theta_{C^0}(\Psi_{x \to y})$ increase with $C$. However, here we see less asymmetry between $\Theta_{C^0}(\Psi_{y \to x})$ and $\Theta_{C^0}(\Psi_{x \to y})$, providing less value for determining the \emph{direction} of asymmetric coupling. This closely follows the results from CCM shown in Figure \ref{fig:heuristics2}(a--b).

\section{Discussion}
\label{sec:discussion}

This paper examined the ``closeness principle'' that underlies many state-space methods for detecting causal interaction in coupled dynamical systems --- that is, the principle that when $x \to y$, pairs of points that are close together on the delay embedding of $y$ will correspond to points that are also close on the delay embedding of $x$. We applied the recent stable Takens' theorems of \cite{yap2011stable,eftekhari2018stabilizing} to develop guarantees for one straightforward method for operationalizing the closeness principle to detect causal interaction.

\emph{Theoretical takeaways for closeness-principle-based methods:} While our results provide a foundation for developing more theoretically-grounded tests for causal interaction, they also suggest that the distance preservation of delay embeddings is strongly sensitive to a medley of system properties that are unrelated to causal structure. In addition, we showed empirically that common coupled dynamical systems do not satisfy literal definitions of the closeness principle, and that typical closeness-principle-based heuristics are dependent on not only the direction of asymmetric coupling but also on a myriad of other system-specific traits.

Continuity-based properties implied by the closeness principle are used to justify many state-space methods in the literature. A less explored fundamental principle operationalized by some other heuristics is the \emph{injectivity} of maps between delay embeddings. Many approaches (e.g., \cite{harnack2017topological}) are justified on the basis that the causal structure underlying a coupled system often creates ``folds'' in the joint attractor $M_{xy}$ \cite{amigo2018detecting}. When this is the case, the (non)injectivity of $\Psi_{y \to x}$ or $\pi_x$ can provide evidence of causal structure. Although our theoretical results for the map $\Phi_{\varphi_y}$ in Section \ref{sec:theory-linear-phiy} focused on linear observation functions and systems, we found that a quantity measuring a graded notion of $\pi_x$'s noninjectivity plays a key role in developing a nonlinear analogue of Proposition \ref{thm:linear-forced-stable-takens} using the more general stable Takens' theorem of \cite{eftekhari2018stabilizing}. This is a more nuanced analogue to the assumption on the noninjectivity of $\pi_x$ in \cite{cummins2015efficacy}. The injectivity and invertibility of $\pi_x$ and $\Psi_{y \to x}$ are also closely related to the synchronization of the response system to the driving system.

\emph{Takeaways for practitioners:} Practitioners using state-space algorithms for detecting causal interactions should be aware that individual methods operationalize slightly different proxies for the continuity and injectivity of maps between delay embeddings, each of which is sensitive not only to changes in coupling strength, but also to a large collection of unmeasurable system-specific properties whose effects can easily be conflated coupling. Like \cite{hirata2016detecting,cliff2022unifying}, we suggest that practitioners apply a collection of these techniques rather than relying on the output of a single algorithm in this class of methods which may be overly-sensitive to specific system properties outside of coupling strength. The consistent output of multiple methods can be interpreted as circumstantial evidence for asymmetric coupling.

Previous work \cite{quianquiroga2000learning,smirnov2005detection,cobey2016limits,hirata2016detecting,krakovska2018comparison,munch2019frequently} has studied many of these confounding system properties and suggested that the success of closeness-principle-based methods depends on a smorgasboard of system properties not directly related to causal structure, including observation and process noise statistics \cite{chicharro2009reliable,schiecke2015convergent,cobey2016limits,jiang2016directed,amigo2018detecting}, relative system dimension and activity \cite{arnhold1999robust,quianquiroga2000learning,romano2007estimation,benko2018exact,krakovska2019correlation}, the injectivity of the combined system \footnote{Specifically, if the projection $\pi_x \colon M_{xy} \to M_x$ is injective, it and $\Psi_{y \to x}$ may be invertible. The noninjectivity of $\pi_x$ is often described as creating ``folds'' in $M_{xy}$ \cite{harnack2017topological,amigo2018detecting} that allow inference; see \cite[pg.~344]{cummins2015efficacy}.} \cite{levanquyen1999nonlinear,harnack2017topological,amigo2018detecting}, the presence of transient dynamics \cite{cobey2016limits} or generalized synchronization \cite{rulkov1995generalized,cobey2016limits,martin2020comment}, the dimension $m$ and time lag $\tau$ used to construct delay embeddings \cite{tajima2015untangling,schiecke2015convergent,krakovska2016testing,cobey2016limits}, and consistency with the (unverifiable) assumptions of Takens' theorem \cite{munch2019frequently}. Our results underline the strong and less explored role of the measurement function used to construct the delay embedding --- in particular, how well the measurement function aligns with structure of a system's attractor.

It is also important for practitioners to carefully consider the sense in which closeness-principle-based algorithms detect ``causal'' interactions. Precisely speaking, the causal relationship $x \to y$ implies that an intervention on $x$ would produce a corresponding change in $y$. The class of methods described in Section \ref{sec:background-causal-inference} provide only necessary conditions for the existence of causal relationships, so while it is justified to interpret a negative result from these tests as implying the \emph{absence} of a causal relationship, a positive result does not prove the \emph{existence} of a causal relationship. This interpretation is consistent with discussion of state-space methods in \cite{sugihara2012detecting,cummins2015efficacy,krakovska2018comparison,butler2023causal} --- and best practice for interpreting the output of other methods such as transfer entropy \cite{schreiber2000measuring} and Granger (predictive) causality \cite{granger1969investigating}. In our theoretical results, we required a strong causal sufficiency assumption --- which eliminates the possibility of, for instance, confounding interactions --- to derive necessary and sufficient conditions on the relationships between state space maps for the existence of causal interactions. Methods that do not implicitly or explicitly make this type of assumption instead require data with fortuitous invariance characteristics \cite{pfister2019learning} or the ability to perform (often difficult or impossible) experiments \cite{peters2020causal}.

\emph{Future work:} Our results suggest several avenues for future work. The theoretical results in Section \ref{sec:theory} were phrased in terms of simple isometry constants (i.e., minimum and maximum distance ratios); results that consider more nuanced statistics of how mappings between delay embeddings preserve distances may be more useful for developing practical and robust tests. Second, translating results phrased in terms of \emph{distance} preservation to consider \emph{nearest neighbor} preservation would help close the gap between the theory we have developed here and many algorithms used in practice. Although we expect theoretical guarantees on when nearest neighbor relationships are preserved to be more difficult to derive than purely distance-based guarantees, computing distance ratios for pairs of points is computationally expensive and sensitive to observation noise; nearest-neighbor-based tests have been used in practice both to determine embedding parameters (e.g., \cite{kennel1992determining,abarbanel1993local}) and for detecting causal interaction \cite{arnhold1999robust,quianquiroga2000learning,andrzejak2003bivariate}. Third, our results used the stable Takens' theorems of \cite{yap2011stable,eftekhari2018stabilizing}, which provide only \emph{sufficient} conditions for the preservation of distance ratios between a system and its delay embedding. A key finding of our work was that intrinsic system properties, and the relationship between a system and the measurement function used to observe it, can have large impacts on closeness-principle-based heuristics. The development of \emph{necessary} conditions for embedding stability would help to disentangle the impact of causal interaction from other system properties. Finally, while our results deal with coupled systems comprised of only two subsystems, the manifold filtration framework of \cite{cummins2015efficacy} might be used to extend these results to more complex systems.

\section*{Acknowledgements}

This work was supported by NSF grants CCF-1409422 and CCF-1350954, NIH grant 1R01NS115327, and the National Defense Science \& Engineering Graduate (NDSEG) Fellowship. M.R.O.\ thanks Adam Willats for helpful discussions.

\printbibliography

\clearpage
\appendix

\section{Proof of Proposition \ref{thm:linear-forced-stable-takens}}
\label{sup:proofs-linear-forced-stable-takens}

Proposition \ref{thm:linear-forced-stable-takens} shows that delay embeddings of systems in class $\mathcal{A}(d)$ constructed with linear measurement functions $\varphi_y$ that ``use information about $y$'' can be stable mappings only if $x \to y$.

\begin{proof}
	First, note that in the system matrix for a linear system
	\begin{equation}
		A = \begin{bmatrix} A_{xx} & A_{xy} \\ A_{yx} & A_{yy} \end{bmatrix},
		\label{eq:system-matrix-partitioned}
	\end{equation}
	$A_{yx} = 0$ when $x \not\to y$ but $A_{yx} \neq 0$ when $x \to y$. The proof will use this fact to show that the delay embedding operator $\Phi_{\varphi_y}$ is rank-deficient when $x \not\to y$ but full rank (in general) when $x \to y$.
	
	By assumption both the measurement operator $\varphi_y(\cdot) = \left< h_{\varphi_y}, \cdot \right>$ and the system \eqref{eq:system-matrix-partitioned} are linear. This allows the delay embedding operator $\Phi_{\varphi_y}$ to be written in matrix notation as the linear operator
	\begin{equation}
		\Phi_{\varphi_y} = \begin{bmatrix} h_{\varphi_y}^T e^{-0 \cdot A \tau} \\ h_{\varphi_y}^T e^{-1 \cdot A \tau} \\ \vdots \\ h_{\varphi_y}^T e^{-(m-1) A \tau} \end{bmatrix} \in \R^{m \times (n_x + n_y)}. \label{eq:linear-forced-stable-takens-matrix}
	\end{equation}
	The form of $e^{-kA\tau}$ --- and therefore the properties of $\Phi_{\varphi_y}$ --- depends on whether $x \to y$ or $x \not\to y$.
	
	\emph{Part 1: when $x \not\to y$, $\Phi_{\varphi_y}$ is not a stable map.} When $x \not\to y$, the system matrix \eqref{eq:system-matrix-partitioned} has $A_{yx} = 0$, so
	\begin{equation*}
		e^{-k A \tau} = \begin{bmatrix} e^{-k A_{xx} \tau} & \widetilde{A}_{xy,k} \\ 0_{n_y \times n_x} & e^{-k A_{yy} \tau} \end{bmatrix}
	\end{equation*}
	where $\widetilde{A}_{xy}^{(k)} = \sum_{j=0}^{\infty} (j!)^{-1} (-k \tau)^j \sum_{i=1}^j A_{xx}^{j-i} A_{xy} A_{yy}^{i-1}$. The matrix representation of $\Phi_{\varphi_y}$ is thus
	\begin{align*}
		\Phi_{\varphi_y} &= \begin{bmatrix} h_{\varphi_y}^T \begin{bmatrix} I_{n_x \times n_x} & 0_{n_x \times n_y} \\ 0_{n_y \times n_x} & I_{n_y \times n_y} \end{bmatrix} \\ h_{\varphi_y}^T \begin{bmatrix} e^{- A_{xx} \tau} & \widetilde{A}_{xy}^{(1)} \\ 0_{n_y \times n_x} & e^{- A_{yy} \tau} \end{bmatrix} \\ \vdots \\ h_{\varphi_y}^T \begin{bmatrix} e^{- (m-1) A_{xx} \tau} & \widetilde{A}_{xy}^{(m-1)} \\ 0_{n_y \times n_x} & e^{- (m-1) A_{yy} \tau} \end{bmatrix} \end{bmatrix} \\
		\intertext{Because $\varphi_y$ is linear and depends only on $y$, $h_{\varphi_y} = \begin{bmatrix} 0_{n_x}^T & h_{\gamma_y}^T \end{bmatrix}^T$:}
		&= \begin{bmatrix} \begin{bmatrix} 0_{n_x}^T & h_{\gamma_y}^T \end{bmatrix} \begin{bmatrix} I_{n_x \times n_x} & 0_{n_x \times n_y} \\ 0_{n_y \times n_x} & I_{n_y \times n_y} \end{bmatrix} \\ \begin{bmatrix} 0_{n_x}^T & h_{\gamma_y}^T \end{bmatrix} \begin{bmatrix} e^{- A_{xx} \tau} & \widetilde{A}_{xy}^{(1)} \\ 0_{n_y \times n_x} & e^{- A_{yy} \tau} \end{bmatrix} \\ \vdots \\ \begin{bmatrix} 0_{n_x}^T & h_{\gamma_y}^T \end{bmatrix} \begin{bmatrix} e^{- (m-1) A_{xx} \tau} & \widetilde{A}_{xy}^{(m-1)} \\ 0_{n_y \times n_x} & e^{- (m-1) A_{yy} \tau} \end{bmatrix} \end{bmatrix} \\
		&= \begin{bmatrix} & h_{\gamma_y}^T e^{0 \cdot A_{yy} \tau} \\ 0_{m \times n_x} & \vdots \\ & h_{\gamma_y}^T e^{-(m-1) \cdot A_{yy} \tau} \end{bmatrix},
	\end{align*}
	which is rank-deficient. The matrix representation of $\Phi_{\varphi_y}$ therefore has nontrivial nullspace, so there exist distinct $p, q$ such that $\Phi_{\varphi_y} (p - q) = 0$. Thus, there exists $p \neq q$ such that
	\begin{equation*}
		\frac{\norm{\Phi_{\varphi_y}(p)-\Phi_{\varphi_y}(q)}_2^2}{\norm{p-q}_2^2} = 0,
	\end{equation*}
	so $\Phi_{\varphi_y}$ is not stable.
	
	\emph{Part 2: when $x \to y$, $\Phi_{\varphi_y}$ may be a stable map.} When the linear system has underlying causal structure $x \to y$, the system matrix \eqref{eq:system-matrix-partitioned} satisfies $A_{yx} \neq 0$. We then have
	\begin{equation*}
		e^{-kA\tau} = \begin{bmatrix} \widetilde{A}_{xx}^{(k)} & \widetilde{A}_{xy}^{(k)} \\ \widetilde{A}_{yx}^{(k)} & \widetilde{A}_{yy}^{(k)} \end{bmatrix}.
	\end{equation*}
	The blocks $\widetilde{A}_{xx}^{(k)}$, $\widetilde{A}_{xy}^{(k)}$, $\widetilde{A}_{yx}^{(k)}$, and $\widetilde{A}_{yy}^{(k)}$ do not admit simple expressions --- but they are in general nonzero. The matrix representation of $\Phi_{\varphi_{y}}$ is thus
	\begin{align*}
		\Phi_{\varphi_y} &= \begin{bmatrix} h_{\varphi_y}^T \begin{bmatrix} I_{n_x \times n_x} & 0_{n_x \times n_y} \\ 0_{n_y \times n_x} & I_{n_y \times n_y} \end{bmatrix} \\ h_{\varphi_y}^T \begin{bmatrix} \widetilde{A}_{xx}^{(1)} & \widetilde{A}_{xy}^{(1)} \\ \widetilde{A}_{yx}^{(1)} & \widetilde{A}_{yy}^{(1)} \end{bmatrix} \\ \vdots \\ h_{\varphi_y}^T \begin{bmatrix} \widetilde{A}_{xx}^{(m-1)} & \widetilde{A}_{xy}^{(m-1)} \\ \widetilde{A}_{yx}^{(m-1)} & \widetilde{A}_{yy}^{(m-1)} \end{bmatrix} \end{bmatrix} \\
		\intertext{Because $\varphi_y$ is linear and depends only on $y$, $h_{\varphi_y} = \begin{bmatrix} 0_{n_x}^T & h_{\gamma_y}^T \end{bmatrix}^T$:}
		&= \begin{bmatrix} \begin{bmatrix} 0_{n_x}^T & h_{\gamma_y}^T \end{bmatrix} \begin{bmatrix} I_{n_x \times n_x} & 0_{n_x \times n_y} \\ 0_{n_y \times n_x} & I_{n_y \times n_y} \end{bmatrix} \\ \begin{bmatrix} 0_{n_x}^T & h_{\gamma_y}^T \end{bmatrix} \begin{bmatrix} \widetilde{A}_{xx}^{(1)} & \widetilde{A}_{xy}^{(1)} \\ \widetilde{A}_{yx}^{(1)} & \widetilde{A}_{yy}^{(1)} \end{bmatrix} \\ \vdots \\ \begin{bmatrix} 0_{n_x}^T & h_{\gamma_y}^T \end{bmatrix} \begin{bmatrix} \widetilde{A}_{xx}^{(m-1)} & \widetilde{A}_{xy}^{(m-1)} \\ \widetilde{A}_{yx}^{(m-1)} & \widetilde{A}_{yy}^{(m-1)} \end{bmatrix} \end{bmatrix} \\
		&= \begin{bmatrix} 0_{1 \times n_x} & h_{\gamma_y}^T \\ h_{\gamma_y}^T \widetilde{A}_{yx}^{(1)} & h_{\gamma_y}^T \widetilde{A}_{yy}^{(1)} \\ \vdots \\ h_{\gamma_y}^T \widetilde{A}_{yx}^{(m-1)} & h_{\gamma_y}^T \widetilde{A}_{yy}^{(m-1)} \end{bmatrix},
	\end{align*}
	which in general has full rank. If the conditions of Theorem \ref{thm:linear-stable-takens} are satisfied, this matrix will have favorable structure, so $\Phi_{\varphi_y}$ can satisfy \eqref{eq:stable-embedding}. Note that $\widetilde{A}_{yx}$ and $\widetilde{A}_{yy}$ in general depend on all blocks of $A$, so the relevant properties of $\Phi_{\varphi_y}$ also depend on the behavior of the $x$ subsystem.
\end{proof}

\section{Proof of Proposition \ref{thm:causality-stable-takens-1}}
\label{sup:proofs-causality-stable-takens-1}

Proposition \ref{thm:causality-stable-takens-1} relates distance preservation in $\{ \Phi_{\gamma_x}, \Phi_{\gamma_y}, \Phi_{\varphi_x}, \Phi_{\varphi_y} \}$ to lower or upper bounds (depending on the causal structure) on the distance preservation of $\Phi_{y \to x}$.

\begin{proof}
	By Takens' theorem, $M_x$ and $N_x$ are diffeomorphic, and $M_y$ and $N_y$ are diffeomorphic, while the relationship of $M_{xy}$ to $N_x$ and $N_y$ depends on the measurement operator and underlying causal structure of the system.
	
	First note that (\ref{eq:stabletakens-Phi-gammax}--\ref{eq:stabletakens-Phi-phiy}) imply that
	\begin{align}
		\frac{1}{u_{\gamma_x}} &\leq \frac{\norm{\Phi^{-1}_{\gamma_x}(p)-\Phi^{-1}_{\gamma_x}(q)}_2^2}{\norm{p-q}_2^2} \leq \frac{1}{l_{\gamma_x}} \quad\forall~ p,q \in N_x, \label{eq:stabletakens-Phiinv-gammax} \\
		\frac{1}{u_{\gamma_y}} &\leq \frac{\norm{\Phi^{-1}_{\gamma_y}(p)-\Phi^{-1}_{\gamma_y}(q)}_2^2}{\norm{p-q}_2^2} \leq \frac{1}{l_{\gamma_y}} \quad\forall~ p,q \in N_y. \label{eq:stabletakens-Phiinv-gammay} \\
		\frac{1}{u_{\varphi_x}} &\leq \frac{\norm{\Phi^{-1}_{\varphi_x}(p)-\Phi^{-1}_{\varphi_x}(q)}_2^2}{\norm{p-q}_2^2} \leq \frac{1}{l_{\varphi_x}} \quad\forall~ p,q \in N_x, \label{eq:stabletakens-Phiinv-phix} \\
		\frac{1}{u_{\varphi_y}} &\leq \frac{\norm{\Phi^{-1}_{\varphi_y}(p)-\Phi^{-1}_{\varphi_y}(q)}_2^2}{\norm{p-q}_2^2} \leq \frac{1}{l_{\varphi_y}} \quad\forall~ p,q \in N_y. \label{eq:stabletakens-Phiinv-phiy}
	\end{align}
	(To see this, let $p = \Phi_{\varphi_z}^{-1}(p')$ and $q = \Phi_{\varphi_z}^{-1}(q')$, $p', q' \in N_z$ and note that $\Phi_{\varphi_z}$ is bijective by Takens' theorem, so $\Phi_{\varphi_z}(\Phi_{\varphi_z}^{-1}(\cdot))$ is the identity map.)
	
	These near-isometries are the first key ingredient of our result about $\Psi_{y \to x} \colon N_y \to N_x$; they allow us to relate distances on the attractors $\{M_{xy}, M_x, M_y\}$ to distances on their delay embeddings.
	
	The second key ingredient allows us to relate distances on the attractors of different subsystems. The mapping $\pi_x \colon M_{xy} \to M_x$ is linear and nonexpansive,
	\begin{equation}
		0 \leq \frac{\norm{\pi_x(p)-\pi_x(q)}_2^2}{\norm{p-q}_2^2} = \frac{\norm{\pi_x(p-q)}_2^2}{\norm{p-q}_2^2} \leq 1, \label{eq:projection-nonexpansive}
	\end{equation}
	and similarly the projection $\pi_y \colon M_{xy} \to M_y$ is linear and nonexpansive. By contrast, the inclusion map $\iota_x \colon M_y \to M_{xy}$ is noncontractive: for $p_y, q_y \in M_y$ and contemporaneous $p_x, q_x \in M_x$,
	\begin{equation}
		\frac{\norm{\iota_x(p_y)-\iota_x(q_y)}_2^2}{\norm{p_y-q_y}_2^2} = \frac{\norm{\begin{bmatrix} p_x \\ p_y \end{bmatrix} - \begin{bmatrix} q_x \\ q_y \end{bmatrix}}_2^2}{\norm{p_y-q_y}_2^2} = \frac{\norm{p_x-q_x}_2^2+\norm{p_y-q_y}_2^2}{\norm{p_y-q_y}_2^2} \geq 1,
		\label{eq:inclusion-map-noncontractive}
	\end{equation}
	and similarly the inclusion map $\iota_y \colon M_x \to M_{xy}$ is noncontractive. We now apply these facts to bound the stability of $\Psi_{y \to x}$ based on whether $x \to y$ or $y \to x$.
	
	\begin{figure}[t]
		\centering
		\begin{tabular}{cc}
			$\boxed{\text{(a)}~~x \to y\vphantom{\perp}}$ & $\boxed{\text{(b)}~~y \to x\vphantom{\perp}}$ \\[0.2cm]
			\begin{tikzpicture}
				\matrix (m) [matrix of math nodes, row sep=3em, column sep=5em, minimum width=4em, minimum height=3em, nodes={draw=black, minimum height=2.5em}] { M_x & M_{xy} \\ N_x & N_y \\ };
				\path[-{Latex[length=0.7em]}]
				(m-1-1) edge node [left] {$\Phi_{\gamma_x}$} (m-2-1)
				(m-2-2) edge node [right] {$\Phi_{\varphi_y}^{-1}$} (m-1-2)
				(m-1-2) edge node [above] {$\pi_x$} (m-1-1);
			\end{tikzpicture} &
			\begin{tikzpicture}
				\matrix (m) [matrix of math nodes, row sep=3em, column sep=5em, minimum width=4em, minimum height=3em, nodes={draw=black, minimum height=2.5em}] { M_{xy} & M_y \\ N_x & N_y \\ };
				\path[-{Latex[length=0.7em]}]
				(m-1-1) edge node [left] {$\Phi_{\varphi_x}$} (m-2-1)
				(m-2-2) edge node [right] {$\Phi_{\gamma_y}^{-1}$} (m-1-2)
				(m-1-2) edge node [above] {$\iota_x$} (m-1-1);
			\end{tikzpicture}
		\end{tabular}
		\caption{Possible intermediate relationships between $N_y$ and $N_x$ in a bivariate system.}
		\label{fig:mappings-cases}
	\end{figure}
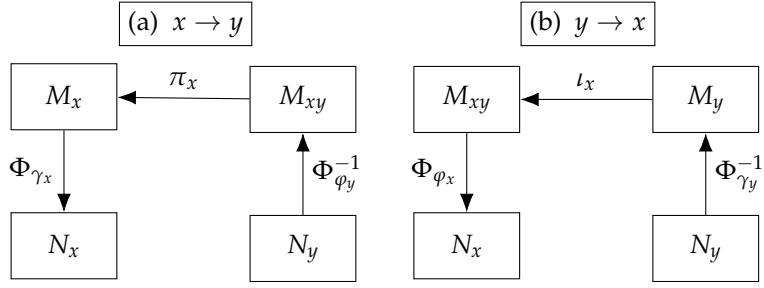
	
	\emph{Case $x \to y$:} When $x \to y$, the forced Takens' theorem guarantees that $\Phi_{\varphi_y} \colon M_{xy} \to N_y$ is diffeomorphic (When $x \to y$, $\Phi_{\varphi_x}$ is not necessarily diffeomorphic. Similarly, when $y \to x$, $\Phi_{\varphi_y}$ is not necessarily diffeomorphic.) We can then construct a mapping from $N_y$ to $N_x$ as $\Psi_{y \to x} = \Phi_{\gamma_x} \circ \pi_x \circ \Phi_{\varphi_y}^{-1}$ (Figure \ref{fig:mappings-cases}(a)). Bounding the stability of this composition using \eqref{eq:stabletakens-Phi-gammax}, \eqref{eq:projection-nonexpansive}, and \eqref{eq:stabletakens-Phiinv-phiy} yields
	\begin{gather*}
		\frac{\norm{\Psi_{y \to x}(p) - \Psi_{y \to x}(q)}_2^2}{\norm{p-q}_2^2} = \frac{\norm{\Phi_{\gamma_x} \circ \pi_x \circ \Phi_{\varphi_y}^{-1}(p) - \Phi_{\gamma_x} \circ \pi_x \circ \Phi_{\varphi_y}^{-1}(q)}_2^2}{\norm{p-q}_2^2} \\
		= \underbrace{\frac{\norm{\Phi_{\gamma_x} \circ \pi_x \circ \Phi_{\varphi_y}^{-1}(p) - \Phi_{\gamma_x} \circ \pi_x \circ \Phi_{\varphi_y}^{-1}(q)}_2^2}{\norm{\pi_x \circ \Phi_{\varphi_y}^{-1}(p) - \pi_x \circ \Phi_{\varphi_y}^{-1}(q)}_2^2}}_{\in~ [l_{\gamma_x},~u_{\gamma_x}]} \times \underbrace{\frac{\norm{\pi_x \circ \Phi_{\varphi_y}^{-1}(p) - \pi_x \circ \Phi_{\varphi_y}^{-1}(q)}_2^2}{\norm{\Phi_{\varphi_y}^{-1}(p) - \Phi_{\varphi_y}^{-1}(q)}_2^2}}_{\in~ [0,~ 1]} \times \underbrace{\frac{\norm{\Phi_{\varphi_y}^{-1}(p) - \Phi_{\varphi_y}^{-1}(q)}_2^2}{\norm{p-q}_2^2}}_{\in~ [u_{\varphi_y}^{-1},~ l_{\varphi_y}^{-1}]},
	\end{gather*}
	so
	\begin{gather*}
		0 = l_{\gamma_x} \times 0 \times u_{\varphi_y}^{-1} \leq \frac{\norm{\Psi_{y \to x}(p) - \Psi_{y \to x}(q)}_2^2}{\norm{p-q}_2^2} \leq u_{\gamma_x} \times 1 \times l_{\varphi_y}^{-1} = \frac{u_{\gamma_x}}{l_{\varphi_y}}
	\end{gather*}
	for all $p,q \in N_y$.
	
	\emph{Case $y \to x$:} When $y \to x$, the forced Takens' theorem guarantees that $\Phi_{\varphi_x} \colon M_{xy} \to N_x$ is diffeomorphic. In this case a mapping from $N_y$ to $N_x$ can be constructed as $\Psi_{y \to x} = \Phi_{\varphi_x} \circ \iota_x \circ \Phi_{\gamma_y}^{-1}$ (Figure \ref{fig:mappings-cases}(b)). Bounding the stability of this composition using \eqref{eq:stabletakens-Phi-phix}, \eqref{eq:inclusion-map-noncontractive}, and \eqref{eq:stabletakens-Phiinv-gammay} yields
	\begin{gather*}
		\frac{\norm{\Psi_{y \to x}(p) - \Psi_{y \to x}(q)}_2^2}{\norm{p-q}_2^2} = \frac{\norm{\Phi_{\varphi_x} \circ \iota_x \circ \Phi_{\gamma_y}^{-1}(p) - \Phi_{\varphi_x} \circ \iota_x \circ \Phi_{\gamma_y}^{-1}(q)}_2^2}{\norm{p-q}_2^2} \\
		= \underbrace{\frac{\norm{\Phi_{\varphi_x} \circ \iota_x \circ \Phi_{\gamma_y}^{-1}(p) - \Phi_{\varphi_x} \circ \iota_x \circ \Phi_{\gamma_y}^{-1}(q)}_2^2}{\norm{\iota_x \circ \Phi_{\gamma_y}^{-1}(p) - \iota_x \circ \Phi_{\gamma_y}^{-1}(q)}_2^2}}_{\in~ [l_{\varphi_x},~u_{\varphi_x}]} \times \underbrace{\frac{\norm{\iota_x \circ \Phi_{\gamma_y}^{-1}(p) - \iota_x \circ \Phi_{\gamma_y}^{-1}(q)}_2^2}{\norm{\Phi_{\gamma_y}^{-1}(p) - \Phi_{\gamma_y}^{-1}(q)}_2^2}}_{\geq~1} \times \underbrace{\frac{\norm{\Phi_{\gamma_y}^{-1}(p) - \Phi_{\gamma_y}^{-1}(q)}_2^2}{\norm{p-q}_2^2}}_{\in~ [u_{\gamma_y}^{-1},~ l_{\gamma_y}^{-1}]},
	\end{gather*}
	so
	\begin{gather*}
		\frac{l_{\varphi_x}}{u_{\gamma_y}} = l_{\varphi_x} \times 1 \times u_{\gamma_y}^{-1} \leq \frac{\norm{\Psi_{y \to x}(p) - \Psi_{y \to x}(q)}_2^2}{\norm{p-q}_2^2} \leq u_{\varphi_x} \times \infty \times l_{\gamma_y}^{-1} = \infty
	\end{gather*}
	for all $p,q \in N_y$.
\end{proof}

\section{Proof of Proposition \ref{thm:causality-stable-takens-iff}}
\label{sup:proofs-causality-stable-takens-iff}

Proposition \ref{thm:causality-stable-takens-iff} provides a test for the existence of causal structure using Assumptions \ref{ass:inclusion-map-expansivity} and \ref{ass:causal-sufficiency}.

\begin{proof}
	($\implies$) Same as case $x \to y$ of Proposition \ref{thm:causality-stable-takens-1}. This direction does not require Assumptions \ref{ass:inclusion-map-expansivity} or \ref{ass:causal-sufficiency}.
	
	($\impliedby$) By assumption we have that for all $p,q \in N_y$,
	\begin{equation*}
		\frac{\norm{\Psi_{y \to x}(p) - \Psi_{y \to x}(q)}_2^2}{\norm{p-q}_2^2} \leq \frac{u_{\gamma_x}}{l_{\varphi_y}},
	\end{equation*}
	so by Assumption \ref{ass:causal-sufficiency} either $x \to y$ or $y \to x$.
	
	Assume for contradiction that $y \to x$ so that the mapping from $N_y$ to $N_x$ is $\Psi_{y \to x} = \Phi_{\varphi_x} \circ \iota_x \circ \Phi_{\gamma_y}^{-1}$ (Figure \ref{fig:mappings-cases}(b)). Using the isometry constants \eqref{eq:stabletakens-Phi-phix} and \eqref{eq:stabletakens-Phiinv-gammay}, we have
	\begin{gather*}
		\frac{\norm{\Psi_{y \to x}(p) - \Psi_{y \to x}(q)}_2^2}{\norm{p-q}_2^2} = \frac{\norm{\Phi_{\varphi_x} \circ \iota_x \circ \Phi_{\gamma_y}^{-1}(p) - \Phi_{\varphi_x} \circ \iota_x \circ \Phi_{\gamma_y}^{-1}(q)}_2^2}{\norm{p-q}_2^2} \\
		= \underbrace{\frac{\norm{\Phi_{\varphi_x} \circ \iota_x \circ \Phi_{\gamma_y}^{-1}(p) - \Phi_{\varphi_x} \circ \iota_x \circ \Phi_{\gamma_y}^{-1}(q)}_2^2}{\norm{\iota_x \circ \Phi_{\gamma_y}^{-1}(p) - \iota_x \circ \Phi_{\gamma_y}^{-1}(q)}_2^2}}_{\geq~ l_{\varphi_x}} \times \frac{\norm{\iota_x \circ \Phi_{\gamma_y}^{-1}(p) - \iota_x \circ \Phi_{\gamma_y}^{-1}(q)}_2^2}{\norm{\Phi_{\gamma_y}^{-1}(p) - \Phi_{\gamma_y}^{-1}(q)}_2^2} \times \underbrace{\frac{\norm{\Phi_{\gamma_y}^{-1}(p) - \Phi_{\gamma_y}^{-1}(q)}_2^2}{\norm{p-q}_2^2}}_{\geq~ u_{\gamma_y}^{-1}} \leq \frac{u_{\gamma_x}}{l_{\varphi_y}}
	\end{gather*}
	so
	\begin{equation*}
		\frac{\norm{\iota_x \circ \Phi_{\gamma_y}^{-1}(p) - \iota_x \circ \Phi_{\gamma_y}^{-1}(q)}_2^2}{\norm{\Phi_{\gamma_y}^{-1}(p) - \Phi_{\gamma_y}^{-1}(q)}_2^2} \leq \frac{u_{\gamma_x} u_{\gamma_y}}{l_{\varphi_x} l_{\varphi_y}}
	\end{equation*}
	for all $p, q \in N_y$. Since $\Phi_{\gamma_y}^{-1}$ is a bijection between $N_y$ and $M_y$ by Takens' theorem, we have
	\begin{equation*}
		\frac{\norm{\iota_x(p) - \iota_x(q)}_2^2}{\norm{p - q}_2^2} \leq \frac{u_{\gamma_x} u_{\gamma_y}}{l_{\varphi_x} l_{\varphi_y}}
	\end{equation*}
	for all $p, q \in M_y$, which provides the contradiction with Assumption \ref{ass:inclusion-map-expansivity} needed to complete the proof.
\end{proof}

\end{document}